\newcommand{\framedbox}[2][0.95\textwidth]{
  \centering
  \tikzstyle{mybox} = [draw=black,line width=1.2pt,inner sep=8pt]
  \begin{tikzpicture}
   \node [mybox] (fig){%
    \begin{minipage}{#1}{#2}\end{minipage}
   };
  \end{tikzpicture}
}
\DeclareMathAlphabet\mathbfcal{OMS}{cmsy}{b}{n}
\algrenewcommand\alglinenumber[1]{\tiny #1:}
\newcommand{\algstrut}[1][\algruledefaultfactor]{\vrule width 0pt
depth .25\baselineskip height #1\baselineskip\relax}
\newcommand*{\algrule}[1][\algorithmicindent]{\hspace*{.5em}\vrule\algstrut
\hspace*{\dimexpr#1-.5em}}
\def\ALG@printindent{%
    \ifnum \theALG@nested>0
    \ifx\ALG@text\ALG@x@notext
    \else
    \unskip
    \ALG@printindent@tempcnta=1
    \loop
    \algrule[\csname ALG@ind@\the\ALG@printindent@tempcnta\endcsname]%
    \advance \ALG@printindent@tempcnta 1
    \ifnum \ALG@printindent@tempcnta<\numexpr\theALG@nested+1\relax
    \repeat
    \fi
    \fi
}%
\patchcmd{\ALG@doentity}{\noindent\hskip\ALG@tlm}{\ALG@printindent}{}{\errmessage{failed to patch}}
\newcommand{\norm}[1]{\left\lVert#1\right\rVert}
\newcommand{\abs}[1]{\left\lvert#1\right\rvert}
\newcommand{\oset}[3][0ex]{%
  \mathrel{\mathop{#3}\limits^{
    \vbox to#1{\kern-2\ex@
    \hbox{$\scriptstyle#2$}\vss}}}}
\begin{document}

\title{Fully Homomorphic Encryption based on Multivariate Polynomial Evaluation}


\author{\mbox{Uddipana Dowerah$^{1}$  \and Srinivasan Krishnaswamy$^{1}$ }}


\institute{Uddipana Dowerah \at
              \email{d.uddipana@iitg.ac.in}           
           \and
           Srinivasan Krishnaswamy \at
           \email{srinikris@iitg.ac.in} \\~\\
          $^{1}$ Department of Electronics and Electrical Engineering, 
				Indian Institute of Technology Guwahati, Guwahati-781039, India.  
}

\date{Received: date / Accepted: date}

\maketitle

\begin{abstract}
We propose a multi-bit leveled fully homomorphic encryption scheme using multivariate polynomial evaluations. The security of the scheme depends on the hardness of the Learning with Errors (LWE) problem. 
 For homomorphic multiplication, the scheme uses a polynomial based technique that does not require relinearization (and key switching). The noise associated with the ciphertext increases only linearly with every multiplication. 
\keywords{Fully Homomorphic Encryption \and  Learning with Errors \and Multivariate Polynomials \and Multi-bit Encryption}
\subclass{68P25 \and 94A60}
\end{abstract}

\section{Introduction}
\label{intro}
Homomorphic encryption is the ability to evaluate mathematical functions on encrypted data without decryption. In other words, if $\phi$ denotes a function to be evaluated on the plaintexts $m_1$ and $m_2$, then homomorphic encryption is the ability to compute $\phi(m_1,m_2)$ from their respective encryptions $\textsf{Enc}(m_1)$ and $\textsf{Enc}(m_2)$ without the knowledge of $m_1$ and $m_2$. An encryption scheme is said to be fully homomorphic if functions of arbitrary complexity can be evaluated on the ciphertexts. If an encryption scheme  can evaluate functions of only limited complexity, then it is said to be somewhat homomorphic. 

Following the initial construction of a Fully Homomorphic Encryption (FHE) scheme \cite{gentry2009fully}, similar schemes were proposed in \cite{van2010fully,smart2010fully,brakerski2011fully,gentry2011implementing,gentry2012fully,gentry2012better,coron2011fully,cheon2013batch}. In such schemes, a somewhat homomorphic scheme is converted to a fully homomorphic one using \textit{bootstrapping}. 
Bootstrapping relies on additional hardness assumptions. As an alternative, a different set of schemes were proposed \cite{brakerski2014efficient,brakerski2014leveled,brakerski2012fully,gentry2013homomorphic} based on the hardness of the Learning with Errors (LWE) problem. In these LWE-based schemes, a (leveled) fully homomorphic encryption scheme can be obtained from a somewhat homomorphic one without bootstrapping. 

 The starting point of LWE-based schemes such as \cite{brakerski2014efficient,brakerski2014leveled,brakerski2012fully} is the initial cryptosystem proposed in \cite{regev2005lattices,regev2009lattices}. The ciphertext and the secret key are vectors of size $\mathcal{O}(n)$ and the plaintext is recovered by computing their inner product. Homomorphic multiplication is performed by tensoring two ciphertexts. The corresponding secret key is the tensor product of the original secret key with itself. As a result, homomorphic multiplication squares the size of the ciphertext. This blow up in ciphertext size can be scaled back down to its original size by relinearizing the ciphertext after multiplication. Relinearization is the process of multiplying the ciphertext by an $\tilde{\mathcal{O}}(n^2) \times n$ matrix and the evaluation key must consist of $L$ such matrices in order to evaluate a circuit of depth $L$. 
  In \cite{brakerski2014efficient,brakerski2014leveled}, homomorphic multiplication squares the noise and is countered using a noise management technique called \textit{modulus switching} introduced in \cite{brakerski2014efficient}. In \cite{brakerski2012fully}, the noise grows linearly and a leveled FHE scheme can be obtained without modulus switching.
  
  Another approach to construct an LWE-based leveled FHE scheme without bootstrapping was proposed in \cite{gentry2013homomorphic} using the \textit{approximate eigenvector method}. It avoids the expensive relinearization step and homomorphic addition and multiplication are matrix addition and multiplication respectively. It entirely removes the need for an evaluation key and the noise in evaluated ciphertexts can be controlled using a technique called \textit{flattening}. 
  Further, ring-LWE variants of this scheme were developed in \cite{ducas2015fhew,chillotti2016faster}.


\subsection{Our Contribution}

In this paper, we propose a multi-bit leveled fully homomorphic encryption scheme based on multivariate polynomial evaluations. Here, multiple plaintext bits are encrypted in a single ciphertext. 
 As in \cite{smart2014fully,brakerski2013packed,peikert2008framework,gentry2012fully}, homomorphic addition and multiplication can be performed simultaneously on multiple plaintext bits. Schemes with this property include the ones described in \cite{smart2014fully,brakerski2013packed,peikert2008framework,gentry2012fully}. 
We introduce the proposed scheme as a symmetric key  scheme and then extend it to a public key variant. The security of the scheme depends on the hardness of the LWE problem.

The proposed scheme is based on the evaluation of multivariate polynomials of a secret ideal $\mathcal{I}$. A ciphertext is obtained by evaluating a random polynomial in $\mathcal{I}$ on a set of distinct points and adding scaled plaintext bits to a number of these evaluations corrupted with noise.  
Multiplication in the scheme is performed by evaluating a bilinear map on the ciphertexts. This map is represented by a 3-way tensor and is given as the public evaluation key for multiplication. The aim here is to use the multiplicative property of the polynomial ring to perform homomorphic multiplication.  
 Unlike other LWE-based schemes, homomorphic multiplication does not increase the size of the ciphertexts. Therefore, there is no need for relinearization. The noise associated with the ciphertexts increases only linearly with each homomorphic operation. Hence, a leveled FHE scheme can be obtained without modulus switching. The per gate computation of the scheme is ${\mathcal{O}}(n^3\cdot L^2)$ where $L$ denotes the multiplicative depth of the circuit. 
 
  An attempt to use polynomials for homomorphic multiplication was made in \cite{dowerah2019somewhat}. However, this process required an increase in the ciphertext size with each multiplication.


\vspace{-1em}
\subsection{Paper Organization}

The remainder of the paper is organized as follows. Section \ref{HSM:sec2} contains the preliminaries for the proposed work. 
In Section \ref{HSM:sec4}, we propose a leveled fully homomorphic encryption scheme based on the hardness of the LWE problem. Finally, in Section \ref{PK}, we discuss the public key variant of the scheme.

\section{Preliminaries} \label{HSM:sec2}
\subsection{Notations}
$\lambda$ denotes the security parameter.
The set of integers and natural numbers are denoted by $\mathbb{Z}$ and $\mathbb{N}$ respectively. $\mathbb{R}$ and $\mathbb{Q}$ denote the set of real and rational numbers respectively. Given a set ${S}$, $x\oset{\scriptscriptstyle{\$}}{\leftarrow}{S}$ means that $x$ is sampled  uniformly at random from ${S}$. The cardinality of $S$ is denoted by $\abs{S}$. 
For a real number $x$, $\lfloor x \rfloor$, $\lceil x \rceil $ and $\lfloor x \rceil$ denote the rounding of $x$ down, up or to the nearest integer. 
For a prime $q$, $\mathbb{Z}_q$ denotes a finite field of cardinality $q$ and its elements are represented by the integers in the interval \(\left[-\left\lfloor \frac{q}{2} \right\rfloor,\left\lfloor \frac{q}{2} \right\rfloor \right] \). 
We use `log' to denote logarithm to the base 2. For some $a \in \mathbb{Z}$, we use $(a~{mod}~q)$ and $[a]_q$ interchangeably to denote the modular reduction of $a$ by $q$ into the interval $(-q/2,q/2]\cap \mathbb{Z}$. 
A vector $\bm{v}$ is usually a row vector unless stated otherwise. 
We use $(\bm{v},\bm{w})$ to denote the vector $[\bm{v} \mathbin\Vert \bm{w} ]$. The one norm of a vector $\bm{v}$ is denoted by $\norm{\bm{v}}_1$ and the infinity norm by $\norm{\bm{v}}_{\infty}$. The inner product of two vectors $\bm{v}_1, \bm{v}_2$ is denoted using $\langle \bm{v}_1,\bm{v}_2\rangle:=\bm{v}_1\bm{v}_2^T$. The notation $\bm{v}_1 \odot \bm{v}_2$ denotes the component-wise product of $\bm{v}_1$ and $\bm{v}_2$. A function $negl(x): \mathbb{N} \rightarrow \mathbb{R}$ is called negligible if, for every $c \in \mathbb{N}$, there exists an integer $n_c$ such that $\vert negl(x) \vert < \frac{1}{x^c}$ for all $x > n_c$. We write $negl(\cdot)$ to denote an arbitrary negligible function. 

\subsection{Learning with Errors}
Learning with Errors is the problem of solving a system of noisy linear equations over $\mathbb{Z}_q$. The LWE problem was introduced in \cite{regev2005lattices} and is an extension of the Learning Parity with Noise (LPN) problem to a higher modulus. It can also be seen as the problem of decoding random linear codes. The LWE problem can be defined as follows.
\begin{definition}
{\bf(Learning With Errors)}.  For positive integers $q,n \in \mathbb{N}$, let $\mathcal{X}$ be a probability distribution on $\mathbb{Z}$ and $\bm{s}$ be a secret vector in $\mathbb{Z}_q^n$, chosen uniformly at random. Given polynomially many samples of the form \((\bm{a}_i,b_i)\in \mathbb{Z}_q^n \times \mathbb{Z}_q \) where $\bm{a}_i $ is sampled uniformly at random from $ \mathbb{Z}_q^n $ and $b_i:=\langle \bm{a}_i,\bm{s}\rangle+e_i~({mod}~q)$ for some $e_i \oset{\scriptscriptstyle{\$}}{\leftarrow} \mathcal{X}$, the search variant of the learning with errors problem, denoted by \textup{LWE}$_{n,q,\mathcal{X}}$, is to output the vector $\bm{s} \in \mathbb{Z}_q^n$ with overwhelming probability.
The decisional variant of the problem denoted by \textup{DLWE}$_{n,q,\mathcal{X}}$ is to distinguish LWE samples from samples chosen uniformly at random from \(\mathbb{Z}_q^n \times \mathbb{Z}_q\). The DLWE problem has been shown to be equivalent to the LWE problem \cite{regev2009lattices}.
\end{definition}

The noise in the LWE problem is usually sampled from a discrete Gaussian distribution.
\begin{definition}
{\bf(Discrete Gaussian distribution)}. For $\alpha>0$ and $q=q(n)$, $\mathcal{X}_{\alpha}$ is a discrete Gaussian distribution on $\mathbb{Z}$ with mean zero and standard deviation $\sigma:=\alpha q$ if for an integer $x $ in $(-q/2,q/2]$
 \begin{align}
 Pr\left[x \oset{\scriptscriptstyle{\$}}{\leftarrow} \mathcal{X}_{\alpha}\right]:= \frac{exp\left(-\pi x^2/\sigma^2\right)}{\sum\limits_{y \in (-q/2,q/2]} exp\left(-y^2/\sigma^2 \right)}
\end{align}  
\end{definition}

A distribution over the integers is said to be bounded when it takes values within a specific interval. 
\begin{definition}
{\bf(${B}$-bounded distribution)}. A distribution $\mathcal{X}$ over the set of integers is said to be $B$-bounded if 
\begin{align}
Pr\left[\abs{x}>B ~\vert~ x \oset{\scriptscriptstyle{\$}}{\leftarrow} \mathcal{X}\right]={negl}(n)
\end{align}
\end{definition}

\subsubsection{Hardness of LWE} \label{hardness}
Given that $\mathcal{X}$ in LWE is a discretized Gaussian distribution with $\sigma \geq 2 \sqrt{n}$, statistically 
indistinguishable from a $B$-bounded distribution for some appropriate $B$, there exists a quantum reduction of LWE to approximating the decisional Shortest Vector Problem (GapSVP$_{\gamma}$)\cite{regev2005lattices,regev2009lattices} where $\gamma$ is called the approximation factor and depends on the ratio $q/B$ ($\gamma=(q/B)\cdot \tilde{\mathcal{O}}(n)$). Further, \cite{peikert2009public} and \cite{10.1145/2488608.2488680} gave classical reductions of LWE from the worst-case lattice problems for an exponential modulus and a polynomial modulus respectively.  
The best-known algorithms for GapSVP$_{\gamma}$ require $2^{\tilde{\Omega}(n/\text{log}\,\gamma)}$ time \cite{schnorr1987hierarchy,micciancio2013deterministic}.  
 
 The hardness of LWE can be summarized in terms of the following theorem which summarizes the results in  \cite{regev2009lattices},\cite{peikert2009public} and \cite{10.1145/2488608.2488680}. We state the results in terms of the bound $B$ as stated in \cite{brakerski2012fully,gentry2013homomorphic}.
\begin{theorem}
\textup{({\cite{regev2009lattices,peikert2009public,10.1145/2488608.2488680}}).} Let $q=q(n)$ be prime and let $B \geq \omega(\textup{log}\,n) \cdot \sqrt{n}$. Then, there exists an efficiently sampleable $B$-bounded distribution $\mathcal{X}$ such that if there is an efficient algorithm that solves the average-case DLWE$_{n,q,\mathcal{X}}$ problem, then 
\begin{itemize}
\item There exists an efficient quantum algorithm for solving GapSVP$_{\tilde{\mathcal{O}}(n\cdot q/B)}$ on any $n$-dimensional lattice \cite{regev2009lattices}.

\item If $q \geq 2^{n/2}$, then there exists an efficient classical algorithm that solves the GapSVP$_{\tilde{\mathcal{O}}(n\cdot q/B)}$ problem on any $n$-dimensional lattice \cite{peikert2009public}.

\item There exists an efficient classical algorithm that solves the GapSVP$_{\tilde{\mathcal{O}}(n\cdot q/B)}$ problem on any lattice of dimension $\sqrt{n}$ \cite{10.1145/2488608.2488680}.
\end{itemize}
\end{theorem}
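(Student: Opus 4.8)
The plan is to exhibit a single distribution $\mathcal{X}$ meeting the boundedness requirement and then invoke, as black boxes, the three worst-case to average-case reductions whose combination the theorem records. I would take $\mathcal{X}$ to be the discrete Gaussian $\mathcal{X}_\alpha$ of the earlier definition, with standard deviation $\sigma=\alpha q$ in the regime $\sigma \geq 2\sqrt{n}$ required by the reductions and with $\sigma$ chosen as $\Theta\!\big(B/\omega(\sqrt{\log n})\big)$. First I would check that this $\mathcal{X}$ is efficiently sampleable and $B$-bounded: a discrete Gaussian of polynomially bounded width is sampleable classically, and the standard tail estimate $\Pr[\,\abs{x} > t\sigma\,] \le \mathrm{negl}(n)$ for $t=\omega(\sqrt{\log n})$ shows that, since $B \ge \omega(\log n)\cdot\sqrt{n} \ge \omega(\sqrt{\log n})\cdot\sigma$, the mass outside $[-B,B]$ is negligible. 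This is the only genuinely self-contained verification.

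With $\mathcal{X}$ fixed, the three conclusions follow by feeding a hypothetical efficient solver for average-case DLWE$_{n,q,\mathcal{X}}$ into the respective reductions. The first bullet is Regev's quantum reduction \cite{regev2009lattices}: given the LWE oracle, one builds a quantum iterative procedure that repeatedly uses the oracle to solve bounded-distance decoding on the dual lattice and then applies a quantum Fourier transform to turn each decoding solution into a discrete Gaussian sample of progressively smaller width, thereby producing short lattice vectors and solving GapSVP$_\gamma$ with $\gamma=\tilde{\mathcal{O}}(n\cdot q/B)$, the approximation factor tracking the ratio $q/\sigma \approx q/B$. I would invoke this verbatim rather than reprove it.

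The second bullet follows from Peikert's dequantization \cite{peikert2009public}, which replaces the quantum step by a classical enumeration over a polynomially sized family of candidate cosets; this is precisely what forces the exponential-modulus hypothesis $q \ge 2^{n/2}$, leaving enough room to carry out the classical search while preserving the same approximation factor $\tilde{\mathcal{O}}(n\cdot q/B)$. The third bullet follows from the Brakerski--Langlois--Peikert--Regev--Stehl\'e classical reduction \cite{10.1145/2488608.2488680}, which combines modulus switching with a dimension-reduction/tensoring argument to obtain hardness for polynomial modulus at the cost of lowering the worst-case lattice dimension to $\sqrt{n}$; this accounts for the appearance of the $\sqrt{n}$-dimensional lattice in the third item, again with factor $\tilde{\mathcal{O}}(n\cdot q/B)$.

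The main obstacle is concentrated entirely in the first bullet: the quantum worst-case to average-case reduction is the deep ingredient, and the two classical statements are refinements layered on top of it. Since the theorem is a summary of published results, the honest proof is to instantiate $\mathcal{X}$ as the tail-bounded discrete Gaussian above and then cite \cite{regev2009lattices}, \cite{peikert2009public}, and \cite{10.1145/2488608.2488680} for the three reductions; beyond verifying that $\mathcal{X}$ is $B$-bounded and efficiently sampleable, no new argument is needed.
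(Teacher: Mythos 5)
The paper offers no proof of this theorem at all --- it is stated explicitly as a summary of the results of \cite{regev2009lattices}, \cite{peikert2009public}, and \cite{10.1145/2488608.2488680}, with the distribution's existence implicit in those works --- and your proposal does exactly what is intended: instantiate $\mathcal{X}$ as a suitably narrow discrete Gaussian, check the $B$-boundedness via the standard tail estimate, and invoke the three reductions as black boxes. Your one self-contained step (that $\sigma = \Theta\bigl(B/\omega(\sqrt{\log n})\bigr)$ is compatible with $\sigma \geq 2\sqrt{n}$ given $B \geq \omega(\log n)\cdot\sqrt{n}$, and that the tail outside $[-B,B]$ is then negligible) is correct, so the proposal matches the paper's approach.
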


\subsection{Tensors}


  Let $\phi_{\mathbfcal{T}}: \mathcal{V}_1 \times \mathcal{V}_2 \times \cdots \times \mathcal{V}_n \rightarrow \mathcal{W}$ be a multilinear map, where $\mathcal{V}_1,\hdots,\mathcal{V}_n$ and $\mathcal{W}$ are finite-dimensional vector spaces. Given fixed bases for the vector spaces, $\phi_{\mathbfcal{T}}$ can be represented by a multidimensional array $\mathbfcal{T}$ called a tensor. The order of a tensor is the number of indices required to represent a component of the array. 

\textit{Slices} in a tensor are two-dimensional sections generated by fixing all indices except two. In a third-order tensor, slices generated by keeping the last index fixed are called frontal slices. These are analogous to matrices. Therefore a third-order tensor $\mathbfcal{T}^{I_1 \times I_2 \times I_3}$ is an $I_3$ array of $I_1 \times I_2$ matrices. 

A \textit{Bilinear map} is a function $\phi:\mathcal{V}_1 \times \mathcal{V}_2 \rightarrow \mathcal{V}_3$ that takes two elements from two vector spaces $\mathcal{V}_1$ and $\mathcal{V}_2$ and maps it to an element of a third vector space $\mathcal{V}_3$ such that it is linear in each of its elements. 
An order-3 tensor $\mathbfcal{T}$ can be used to represent a bilinear map $\phi:\mathcal{V} \times \mathcal{V} \rightarrow \mathcal{V}$, on the vector space $\mathcal{V}$. If $dim(\mathcal{V})=n$ and $\{\bm{b}_1,\hdots,\bm{b}_n \}$ denotes a basis for $\mathcal{V}$, then 
\begin{align}
\phi(\bm{b}_i,\bm{b}_j)= \sum_{i=1}^{n}\sum_{j=1}^{n}\sum_{k=1}^{n} T_{ijk} \cdot \bm{b}_k
\end{align}
For a fixed $k$, $\bm{T}_k:=(T_{ijk})_{{{1 \leq i,j \leq n}}}$ represents a unique matrix of order $n \times n$. For $1 \leq k \leq n$, $\bm{T}_k$ forms the frontal slices of the tensor $\mathbfcal{T}$. The bilinear map $\phi$ then acts on two arbitrary vectors $\bm{v}_1=[{v}_{1,1} ~ v_{2,1} ~ \cdots ~ v_{n,1}],\bm{v}_2=[v_{1,2} ~ v_{2,2}~\cdots ~v_{n,2}] \in \mathcal{V}$ as 
 \begin{align} \label{bilinear map}
\phi(\bm{v}_1,\bm{v}_2) 
			&= \sum_{i,j=1}^{n} v_{i,1} \phi(\bm{b}_i,\bm{b}_j )v_{j,2} 
			= \begin{bmatrix}
		\bm{v}_1 \bm{T}_{1} \bm{v}_2^T  & \hdots & \bm{v}_1 \bm{T}_{n} \bm{v}_2^T
		\end{bmatrix}   
\end{align} 

\subsubsection{The $n$-mode Product}
The $n$-mode product defines multiplication of a tensor by a matrix. In general, the elementwise $n$-mode product of a tensor $\mathbfcal{T}^{I_1 \times I_2 \times \cdots \times I_N}$ and a matrix $\bm{M}^{J \times I_n}$ is defined as:
\begin{eqnarray}
{(\mathbfcal{T}\times \bm{M})}_{i_1 i_2 \cdots i_{n-1} j i_{n+1} \cdots i_N} &=& \sum_{i_n=1}^{I_n} \mathbfcal{T}_{i_1 i_2 \cdots i_{n-1} i_n i_{n+1} \cdots i_N} \bm{M}_{j,i_n}
\end{eqnarray}
The resultant tensor is of the order of $(I_1 \times I_2 \times \cdots \times I_{n-1} \times J \times I_{n+1} \times \cdots \times I_N)$.

\section{The Proposed Leveled FHE Scheme} \label{HSM:sec4}

In this section, we discuss the construction of the leveled FHE scheme.  For simplicity, we present a private key variant of this scheme. Subsequently, we explain its conversion to a public key scheme. 


Let us consider an ideal $\mathcal{I} $ of the polynomial ring $ \mathbb{Z}_q[x_1,\hdots,x_v]$, where $q=q(\lambda)$ is prime. For some $r \in \mathbb{N}$, $ \mathbb{Z}_q[x_1,\hdots,x_v]_{\leq r} $ is a vector space of dimension $N:=\binom{v+r}{r}$ over $\mathbb{Z}_q$. Let $\mathcal{I}_{\leq r}$ denote the set of polynomials in $\mathcal{I}$ with degree $\leq r$. Then, $\mathcal{I}_{\leq r}$ is a subspace of $ \mathbb{Z}_q[x_1,\hdots,x_v]_{\leq r} $. Let $ n $ be the dimension of $\mathcal{I}_{\leq r}$. It is very easy to see that a polynomial $f \in \mathbb{Z}_q[x_1,\hdots,x_v]_{\leq r}$ evaluated at all points of $ \mathbb{Z}_q^{v} $ generates a vector in $ \mathbb{Z}_q^{\scaleto{q^v}{6.8pt}} $. The set of such vectors obtained by evaluating all polynomials in $ \mathbb{Z}_q[x_1,\hdots,x_v]_{\leq r} $ constitutes an $N$-dimensional subspace of $ \mathbb{Z}_q^{\scaleto{q^v}{7pt}} $. Similarly, evaluating polynomials in $\mathcal{I}_{\leq r}$ gives us an $ n $-dimensional subspace of $ \mathbb{Z}_q^{\scaleto{q^v}{7pt}} $. Let $ \{\bm{z}_1,\hdots,\bm{z}_\ell\} $ be $\ell$ distinct points in $ \mathbb{Z}_q^{v} $ for some $\ell \in \mathbb{N}$ such that $n < \ell \leq N$. We can always choose the $\ell$ points in such a way that at least $n$ of them are linearly independent. Then, evaluating polynomials in $\mathcal{I}_{\leq r}$ at $ (\bm{z}_1,\hdots,\bm{z}_\ell) $ gives us an $ n $-dimensional subspace $\mathcal{S}_{\scriptscriptstyle \mathcal{I}_{\leq r}}$ of $\mathbb{Z}_q^\ell$. It can be summarized in terms of the following lemma.
\begin{lemma}
For $n, \ell \in \mathbb{N}$ with $n < \ell$, we can find $\ell$ distinct points $\{\bm{z}_i \in \mathbb{Z}_q^v\}_{1 \leq i \leq \ell}$ such that evaluating polynomials in $\mathcal{I}_{\leq r}\subseteq \mathbb{Z}_q[x_1,\hdots,x_v]_{\leq r}$ with $dim(\mathcal{I}_{\leq r}):=n$ at $(\bm{z}_1,\hdots,\bm{z}_\ell)$ spans an $n$-dimensional subspace of the vector space $\mathbb{Z}_q^\ell$.
\end{lemma}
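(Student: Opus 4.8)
The plan is to recast the statement as a rank condition on the evaluation map restricted to $\mathcal{I}_{\leq r}$. First I would fix a basis $\{g_1, \ldots, g_n\}$ of $\mathcal{I}_{\leq r}$ and form the full evaluation matrix $A \in \mathbb{Z}_q^{q^v \times n}$, whose rows are indexed by the points of $\mathbb{Z}_q^v$ and whose $j$-th column lists the values of $g_j$ at all these points. Writing an arbitrary $f \in \mathcal{I}_{\leq r}$ as $f = \sum_j c_j g_j$, the evaluation vector of $f$ is the corresponding linear combination of the columns of $A$, so the image of the full evaluation map is exactly the column space of $A$. The discussion preceding the lemma already asserts that evaluating $\mathcal{I}_{\leq r}$ at all points of $\mathbb{Z}_q^v$ yields an $n$-dimensional image; equivalently, $A$ has full column rank $n$. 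I would take this as the input fact.

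Next I would invoke the equality of row rank and column rank. Since $\mathrm{rank}(A) = n$, there exist $n$ linearly independent rows of $A$; let $\bm{z}_{i_1}, \ldots, \bm{z}_{i_n}$ be the associated points. The $n \times n$ submatrix they determine is invertible, and it is precisely the matrix (in the basis $\{g_j\}$) of the restricted evaluation map $\mathcal{I}_{\leq r} \to \mathbb{Z}_q^n$, $f \mapsto (f(\bm{z}_{i_1}), \ldots, f(\bm{z}_{i_n}))$. Invertibility of this submatrix is equivalent to injectivity of this map, so evaluation at these $n$ points alone already separates all of $\mathcal{I}_{\leq r}$.

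To reach the required $\ell$ points, I would augment $\{\bm{z}_{i_1}, \ldots, \bm{z}_{i_n}\}$ with $\ell - n$ further distinct points of $\mathbb{Z}_q^v$; this is possible because $\ell \leq N \leq q^v$ leaves enough distinct points to choose from. Since the original $n$ points sit among these $\ell$ points, the injectivity established above is preserved when the extra coordinates are appended, so the map $f \mapsto (f(\bm{z}_1), \ldots, f(\bm{z}_\ell))$ is injective on $\mathcal{I}_{\leq r}$. Its image is therefore an $n$-dimensional subspace of $\mathbb{Z}_q^\ell$, which is exactly the assertion of the lemma.

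The argument is essentially linear algebra, so the one genuinely substantive point is the input fact that full evaluation is injective on $\mathcal{I}_{\leq r}$. This is where the implicit hypothesis $r < q$ does the real work, since it keeps the monomials of degree $\leq r$ linearly independent as functions on $\mathbb{Z}_q^v$ and thereby forces $A$ to have full column rank; without it the map could have a kernel (e.g.\ $x_i^q - x_i$) and the rank would drop. Given that fact, which the preceding paragraph of the paper supplies, the extraction of $n$ separating points and the padding up to $\ell$ points are routine, with distinctness ($\ell \leq q^v$) being the only bookkeeping detail to verify.
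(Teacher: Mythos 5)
Your argument is correct and follows essentially the same route as the paper, which proves the lemma in the paragraph immediately preceding it: the evaluations of $\mathcal{I}_{\leq r}$ over all of $\mathbb{Z}_q^v$ form an $n$-dimensional space, so one selects $n$ points whose evaluation rows are linearly independent and pads with further distinct points up to $\ell$. Your write-up is merely more explicit about the rank/injectivity bookkeeping and about the implicit requirement (e.g.\ $r<q$) that makes the full evaluation map injective, which the paper takes for granted.
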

 
The set of evaluation points $ \{\bm{z}_1,\hdots,\bm{z}_\ell\} \in \mathbb{Z}_q^{v} $ must satisfy the following conditions:
\begin{enumerate}
\item Every vector in  $ \mathbb{Z}_q^\ell $ can be got by evaluating a polynomial in $ \mathbb{Z}_q[x_1,\hdots,x_v]_{\leq r} $ at $ (\bm{z}_1,\hdots,\bm{z}_\ell )$.
\item Every vector in $\mathbb{Z}_q^{n}$ can be got by evaluating a polynomial in $\mathcal{I}_{\leq r}$ at $ (\bm{z}_1,\hdots,\bm{z}_{n} )$.
\end{enumerate} 

Because of the conditions imposed on $(\bm{z}_1,\hdots,\bm{z}_\ell)$, we can find a basis $\{\tilde{\bm{s}}_{n+1},\tilde{\bm{s}}_{n+2},\hdots,\tilde{\bm{s}}_\ell\} \in \mathbb{Z}_q^\ell$ for $(\mathcal{S}_{\scriptscriptstyle \mathcal{I}_{\leq r}})^{\perp}$ which has the following form:
\begin{align} \label{sk}
\begin{rcases}
\tilde{\bm{s}}_{n+1} &=\begin{bmatrix} s_{n+1}^1 & \hdots & s_{n+1}^n & 1 & 0 & \hdots & 0 \end{bmatrix}\\
 \tilde{\bm{s}}_{n+2} &=\begin{bmatrix} s_{n+2}^1 & \hdots & s_{n+2}^n & 0 & 1 & \hdots & 0\end{bmatrix} \\
 &~\vdots \\
  \tilde{\bm{s}}_\ell &=\begin{bmatrix} s_{\ell}^1 & \hdots & s_{\ell}^n & 0 & 0 & \hdots & 1\end{bmatrix}
  \end{rcases}
  \end{align}
  
  Alternatively, $\mathcal{S}_{\scriptscriptstyle \mathcal{I}_{\leq r}}$ is the null space of the matrix $\begin{bmatrix}
 \bm{S} & \bm{I}_{\ell-n} \end{bmatrix}$  where $\bm{S}(i,j)=s_{n+i}^j$ for $1 \leq i \leq \ell-n$ and $1 \leq j \leq n$.
In this work, we assume that the choice of $\mathcal{I}$ and the points $\bm{z}_1,\bm{z}_2,\ldots,\bm{z}_\ell$ is such that the rows of the matrix $\bm{S}$ are linearly independent.

\subsection{The Basic Encryption Scheme}

We describe the basic encryption scheme in terms of the following algorithms. (The homomorphic operations are described separately in Section \ref{HP}.) The plaintext space is $\{0,1\}^{\ell-n}$ and the ciphertexts are vectors in $\mathbb{Z}_q^\ell$. Further, we restrict ourselves to the case where $\mathcal{I} $ is a principal ideal. 
  \begin{itemize}[leftmargin=0pt, itemindent=20pt,labelwidth=15pt, labelsep=5pt, listparindent=0.7cm,align=left]
\setlength\itemsep{1.5ex}
\item[$ \bullet$]\textsf{Setup}($1^{\lambda}, 1^L$): Takes as input the security parameter $\lambda$ and a parameter $L$ and 
outputs $n=n(\lambda,L)$, $\ell=\ell(\lambda,L)$, modulus $q=q(\lambda,L)$ and noise distribution $\mathcal{X}=\mathcal{X}(\lambda,L)$ such that $\abs{\mathcal{X}} \leq B$. Here $L$ denotes the depth of the circuit that can be homomorphically evaluated. Let $\pi=(n,\ell,q,\mathcal{X})$.

\item[$ \bullet$] \textsf{KeyGen}($\pi$): Choose two positive integers $v$ and $r^{\prime}$ such that $n =  \binom{v+r^{\prime}}{r^{\prime}}$. Choose integers $r$ and $r_g$ such that $r-r_g = r^{\prime}$. 
 Choose $\ell$ points, $\bm{z}_1,\ldots,\bm{z}_{\ell}$, from $\mathbb{Z}_q^{v}$ such that the aforementioned conditions are satisfied. 
 Choose a random polynomial $g(x_1,\ldots,x_v)$ with degree $r_g$ in $v$ variables such that $g(\bm{z}_1),\hdots,g(\bm{z}_{\ell})$ are all non zero. This polynomial acts as the generator of the ideal $\mathcal{I}$. Generate a basis $\mathcal{B}_{\scriptscriptstyle \mathcal{I}_{\leq r}}=(gh_1,\hdots,gh_n)$ for the subspace $\mathcal{I}_{\leq r}$ by considering linearly independent polynomials $h_1,\ldots,h_n$ having degree less than or equal to $r^{\prime}$.  A basis for $\mathcal{S}_{\scriptscriptstyle \mathcal{I}_{\leq r}}$ is obtained by evaluating the polynomials in $\mathcal{B}_{\scriptscriptstyle \mathcal{I}_{\leq r}}$ at the points $(\bm{z}_1,\hdots,\bm{z}_\ell)$. Construct  a basis of $(\mathcal{S}_{\scriptscriptstyle \mathcal{I}_{\leq r}})^{\perp}$ (as given in Equation \ref{sk}) which can be written in terms of the matrix $\begin{bmatrix}
 \bm{S} & \bm{I}_{\ell-n}
 \end{bmatrix}$ where
  \vspace{-1em}
\begin{align}
    \bm{S}=\begin{bmatrix}
    s_{n+1}^{1} & s_{n+1}^2 & \hdots & s_{n+1}^n \\
    s_{n+2}^1 & s_{n+2}^2 & \hdots & s_{n+2}^n \\
    \vdots & \vdots & \ddots & \vdots \\
    s_{\ell}^1 & s_\ell^2 & \hdots & s_\ell^n
    \end{bmatrix}
\end{align}
Choose a matrix ${\bm{R}} \in \mathbb{Z}_q^{\ell \times \ell}$ such that ${\bm{R}}$ is of the form
\vspace{1ex}
\begin{align} \label{R}
{\bm{R}}:=\left[
\begin{array}{c|c}
\bm{R}_1^{n \times n} &  {0}^{n \times(\ell-n)}\\
\midrule
\bm{R}_2^{(\ell-n)\times n} & \bm{I}_{\ell-n}
\end{array}
\right]
\end{align}
where $\bm{R}_1 \in \mathbb{Z}_q^{n \times n}$ is a randomly chosen full rank matrix and the entries of $\bm{R}_2$ are chosen uniformly at random from $\mathbb{Z}_q$. 
 $\bm{I}_{\ell-n}$ denotes the identity matrix of size $\ell-n$.  The secret key is $sk=(\bm{S},\bm{R}_1,\bm{R}_2)$. The public parameters are $n,\ell$ and $q$.

\item[$  \bullet$] \textsf{Encrypt}$(\pi,sk,\bm{m})$:   To encrypt a message $\bm{m} \in \{0,1\}^{\ell-n}$, sample a vector $\bm{y}$, uniformly at random from $\mathbb{Z}_q^n$ and a vector $\bm{e}=(\bm{0},e_{n+1},\hdots,e_{\ell}) \in \mathbb{Z}_q^\ell$, where $\bm{0}$ denotes the zero vector of order $n$ and each $e_j$ is chosen independently from the distribution $\mathcal{X}$ for $n+1 \leq j \leq \ell$. Let $\bm{p}$ be the vector given by $ \bm{p}=(\bm{0},{\bm{m}})=(\bm{0},m_{n+1},\hdots,m_{\ell})\in \mathbb{Z}_q^\ell$. If $\bm{S}_{\text{enc}}:=\begin{bmatrix}\bm{I}_n &  -\bm{S}^T \end{bmatrix} \in \mathbb{Z}_q^{n \times \ell}$, then the ciphertext can be computed as:
{ \begin{align}
\textstyle \bm{c}=\textstyle  \left( \bm{p}\cdot \left\lfloor \frac{q}{2} \right\rfloor+ \bm{y} \cdot \bm{S}_{\text{enc}}+  \bm{e} \right)\bm{R} ~{mod}~q \in \mathbb{Z}_q^\ell
\end{align}}%
Note that, $\bm{y} \cdot \bm{S}_{\text{enc}}$ essentially represents the evaluation of a polynomial $f \in \mathcal{I}_{\leq r}$ at the points $(\bm{z}_1,\hdots,\bm{z}_\ell)$. This is because $f(\bm{z}_1),\hdots,f(\bm{z}_n)$ can take any value $\bm{y} \in \mathbb{Z}_q^n$ and $f(\bm{z}_j)$ for $n+1 \leq j \leq \ell$ can be expressed as a linear combination of $f(\bm{z}_1),\hdots,f(\bm{z}_n)$, i.e.,
\vspace{-1em}
\begin{align}\label{evaluations}
    f(\bm{z}_j)=-\sum_{i=1}^n s_j^i \cdot f(\bm{z}_i)~(mod~q)
\end{align}

\item[$  \bullet$] \textsf{Decrypt}$(\pi,sk,\bm{c})$: Let $\bm{S}_{\text{dec}}=\bm{R}^{-1}\begin{bmatrix} \bm{S} &~ \bm{I}_{\ell-n} \end{bmatrix}^T \in \mathbb{Z}_q^{\ell \times (\ell-n) }$. Given the ciphertext $\bm{c}$ and the secret key $sk$, ${\bm{m}}$ can be recovered as
{ \begin{align}\label{decrypt}
 {\bm{m}}=\left\lfloor \frac{1}{\left\lfloor q/2 \right\rfloor}\left(\bm{c}\cdot\bm{S}_{\text{dec}} \,{mod}\,q\right)\right\rceil\,{mod}~2 
\end{align} }%
   \end{itemize}

\subsubsection{Correctness of Decryption.} \label{COD}
For correct decryption, the noise in the decryption process must be small. The decryption process involves computing the inner product of the ciphertext with each column of the matrix $\bm{S}_{\text{dec}}$ and reducing it modulo $q$. For each of these products, the decryption function outputs 0 for the corresponding entry if the magnitude of the inner product is $<q/4$ and 1 otherwise. 
In the following lemma, we analyze the magnitude of the noise in decryption. 
\begin{lemma} \label{CoE}
Let $q,n,\ell,\abs{\mathcal{X}} \leq B$ be as described in the scheme. Let $\bm{c}=\left( \bm{p} \cdot \left\lfloor \frac{q}{2} \right\rfloor+ \bm{y} \cdot \bm{S}_{\textup{enc}}+  \bm{e} \right)\bm{R} ~{mod}~q$ be the encryption of a message $\bm{m} \in\{0,1\}^{\ell-n}$ under the key $sk=(\bm{S},\bm{R}_1,\bm{R}_2)$. Then, for some $\bm{e}=(\bm{0},\tilde{\bm{e}}) \in (\bm{0},\mathcal{X}^{\ell-n}) $ with $\norm{{\bm{e}}}_{\infty} \leq B$, it holds that 
$  \bm{c}\cdot  \bm{S}_{\textup{dec}} ={\bm{m}} \cdot \left\lfloor \frac{q}{2} \right\rfloor + \tilde{\bm{e}}~({mod}~q)$.
Further, if $B < \left\lfloor {q}/{2} \right\rfloor/2 $, then ${\bm{m}} \leftarrow \textup{\textsf{Decrypt}}(sk,\bm{c})$.
\end{lemma}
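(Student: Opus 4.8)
The plan is to split the statement into two parts: first prove the exact identity $\bm{c}\cdot\bm{S}_{\text{dec}}=\bm{m}\lfloor q/2\rfloor+\tilde{\bm{e}}\pmod q$ by a direct computation, and then derive correctness of \textsf{Decrypt} from it via a rounding argument. The first part is pure linear algebra over $\mathbb{Z}_q$; I expect the only genuinely delicate point to be the modular wrap-around in the rounding step.

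For the identity, I would substitute the definitions of $\bm{c}$ and $\bm{S}_{\text{dec}}$ and use $\bm{R}\,\bm{R}^{-1}=\bm{I}_\ell$ to cancel the invertible matrix, reducing the quantity to $\left(\bm{p}\lfloor q/2\rfloor+\bm{y}\,\bm{S}_{\text{enc}}+\bm{e}\right)[\bm{S}\mid\bm{I}_{\ell-n}]^T\bmod q$. I would then pair each of the three summands with $[\bm{S}\mid\bm{I}_{\ell-n}]^T=\begin{bmatrix}\bm{S}^T\\\bm{I}_{\ell-n}\end{bmatrix}$ in turn.

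The crux is the masking term. A single block multiplication, using $\bm{S}_{\text{enc}}=[\bm{I}_n\mid-\bm{S}^T]$, gives $\bm{S}_{\text{enc}}[\bm{S}\mid\bm{I}_{\ell-n}]^T=\bm{S}^T-\bm{S}^T=\bm{0}$, so $\bm{y}\,\bm{S}_{\text{enc}}$ drops out entirely; this is precisely the fact that the rows of $\bm{S}_{\text{enc}}$ lie in $\mathcal{S}_{\mathcal{I}_{\leq r}}$, the null space of $[\bm{S}\mid\bm{I}_{\ell-n}]$, and it is the reason decryption is blind to the random evaluation of $f\in\mathcal{I}_{\leq r}$. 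Because both $\bm{p}=(\bm{0},\bm{m})$ and $\bm{e}=(\bm{0},\tilde{\bm{e}})$ vanish on their first $n$ coordinates, multiplying by $[\bm{S}\mid\bm{I}_{\ell-n}]^T$ annihilates the $\bm{S}^T$ block and copies the bottom block, yielding $\bm{p}[\bm{S}\mid\bm{I}_{\ell-n}]^T=\bm{m}$ and $\bm{e}[\bm{S}\mid\bm{I}_{\ell-n}]^T=\tilde{\bm{e}}$. Adding the three contributions gives the asserted identity.

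For correctness I would argue coordinatewise on $\bm{c}\cdot\bm{S}_{\text{dec}}\bmod q$, taking representatives in $(-q/2,q/2]$. When $m_j=0$ the $j$-th entry is $\tilde{e}_j$ with $|\tilde{e}_j|\le B<\lfloor q/2\rfloor/2$, so dividing by $\lfloor q/2\rfloor$ lands in $(-1/2,1/2)$ and rounds to $0$. When $m_j=1$ the entry is $\lfloor q/2\rfloor+\tilde{e}_j$ before reduction, and here lies the main obstacle: for $\tilde{e}_j>0$ this overshoots $q/2$ and wraps to a negative representative, so one cannot simply read off the value. Using that $q$ is an odd prime (hence $\lfloor q/2\rfloor=(q-1)/2$), I would split on the sign of $\tilde{e}_j$ and check that the bound $B<\lfloor q/2\rfloor/2$ keeps the scaled entry strictly farther than $1/2$ from $0$ in each sub-case, so that it rounds to an odd integer and the final reduction mod $2$ returns $1$. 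Hence no coordinate crosses the decision boundary near $q/4$, every bit is recovered, and $\bm{m}\leftarrow\textsf{Decrypt}(sk,\bm{c})$.
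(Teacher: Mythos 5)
Your proposal is correct and follows essentially the same route as the paper's proof: cancel $\bm{R}$ against $\bm{R}^{-1}$ in $\bm{S}_{\text{dec}}$, observe that $\bm{S}_{\text{enc}}[\bm{S}\mid\bm{I}_{\ell-n}]^T=\bm{0}$ while $\bm{p}$ and $\bm{e}$ survive as $\bm{m}$ and $\tilde{\bm{e}}$, and then argue coordinatewise that $B<\lfloor q/2\rfloor/2$ forces the correct rounding. Your treatment of the wrap-around of $\lfloor q/2\rfloor+\tilde{e}_j$ past $q/2$ is somewhat more explicit than the paper's magnitude-threshold phrasing, but it is the same argument.
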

\begin{proof}
If $\tilde{\bm{e}}=(e_{n+1},\hdots,e_{\ell}) \in \mathbb{Z}_q^{\ell-n}$ where $e_j$ denotes the $j^{th}$ non-zero entry of $\bm{e}$ for $n+1 \leq j \leq \ell$, then
\begin{align}
\bm{c}\cdot  \bm{S}_{\textup{dec}} &= \left( \bm{p} \cdot \left\lfloor \frac{q}{2} \right\rfloor +\bm{y}\cdot \bm{S}_{\text{enc}}+\bm{e} \right)\bm{R} \cdot \bm{S}_{\text{dec}} ~({mod}~q)\notag\\
&= {\bm{m}} \cdot \left\lfloor \frac{q}{2} \right\rfloor+ \tilde{\bm{e}}~({mod}~q)
\end{align} 

If $B <  \left\lfloor {q}/{2} \right\rfloor/2$, then for $m_j=0$ for any $n+1 \leq j \leq \ell$ , $\bm{c}\cdot  \bm{S}_{\textup{dec}}(j) ={e}_{j}~({mod}~q)$ where $\abs{e}_j \leq B < q/4$. Hence, the decryption function outputs 0. Similarly, for $m_j=1$, $ \bm{c}\cdot  \bm{S}_{\textup{dec}}(j)=  \left\lfloor \frac{q}{2} \right\rfloor+ e_j~({mod}~q) $ whose magnitude is $> q/4$ and hence, the decryption function outputs 1. Therefore, if $\norm{{\bm{e}}}_{\infty} \leq B$ and $B <  \left\lfloor {q}/{2} \right\rfloor/2$, then $\bm{m} \leftarrow \textup{\textsf{Decrypt}}(sk,\bm{c})$.
\qed
\end{proof}

\subsection{Security}  \label{security}

The security of this scheme follows from Lemma 6.2 of \cite{peikert2011lossy}. The lemma is restated as follows
\begin{lemma} \label{Peikart}
Let $h,\ell =poly(n)$. Choose $\bm{A} \leftarrow \mathbb{Z}_q^{h \times n}$, $\bm{\hat{S}}\leftarrow \mathbb{Z}_q^{ (\ell-n) \times n}$ uniformly at random and $\bm{E} \leftarrow \mathcal{X}^{h \times (\ell-n)}$. If $\bm{B}=\bm{A}\bm{\hat{S}}^T+\bm{E}$, then $(\bm{A},\bm{B})$ is computationally indistinguishable from uniform over $\mathbb{Z}_q^{h \times \ell}$ under the assumption that LWE$_{n,q,\mathcal{X}}$ is hard.
\end{lemma}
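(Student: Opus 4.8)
The plan is to prove the statement by a standard hybrid argument over the columns of $\bm{B}$, reducing this multi-secret matrix-LWE distribution to the single-secret decisional problem DLWE$_{n,q,\mathcal{X}}$ that is assumed hard. The starting observation is that the claimed distribution decomposes column-wise: writing $\hat{\bm{s}}_1,\ldots,\hat{\bm{s}}_{\ell-n}\in\mathbb{Z}_q^n$ for the rows of $\bm{\hat{S}}$ and $\bm{e}_1,\ldots,\bm{e}_{\ell-n}\in\mathbb{Z}_q^h$ for the columns of $\bm{E}$, the $k$-th column of $\bm{B}=\bm{A}\bm{\hat{S}}^T+\bm{E}$ equals $\bm{A}\hat{\bm{s}}_k^T+\bm{e}_k$. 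Each such column is therefore a block of $h$ LWE samples with secret $\hat{\bm{s}}_k$, and---crucially---all $\ell-n$ columns share the \emph{same} public matrix $\bm{A}$. A single column paired with $\bm{A}$ is exactly one instance of DLWE$_{n,q,\mathcal{X}}$ in matrix form, which by the definition of the decisional problem (with $h=poly(n)$ samples) is indistinguishable from $(\bm{A},\bm{u})$ for $\bm{u}$ uniform in $\mathbb{Z}_q^h$.

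Next I would define hybrids $H_0,H_1,\ldots,H_{\ell-n}$, where in $H_j$ the first $j$ columns of the second component are replaced by independent uniform vectors in $\mathbb{Z}_q^h$ while the remaining $\ell-n-j$ columns are kept as genuine LWE columns $\bm{A}\hat{\bm{s}}_k^T+\bm{e}_k$. Then $H_0$ is precisely $(\bm{A},\bm{B})$ and $H_{\ell-n}$ is uniform over $\mathbb{Z}_q^{h\times\ell}$, so it suffices to bound the distinguishing advantage between each consecutive pair $H_{j-1}$ and $H_j$, which differ only in the $j$-th column: an LWE column in $H_{j-1}$ and a uniform one in $H_j$.

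The reduction for the $j$-th step takes a DLWE challenge $(\bm{A},\bm{b}^\ast)$ and assembles the full matrix itself: it uses the given $\bm{A}$, samples columns $1,\ldots,j-1$ uniformly from $\mathbb{Z}_q^h$, places $\bm{b}^\ast$ in column $j$, and for each column $k>j$ samples a fresh secret $\hat{\bm{s}}_k\leftarrow\mathbb{Z}_q^n$ and error $\bm{e}_k\leftarrow\mathcal{X}^h$ and sets that column to $\bm{A}\hat{\bm{s}}_k^T+\bm{e}_k$. When $\bm{b}^\ast$ is an LWE vector the output is distributed as $H_{j-1}$, and when $\bm{b}^\ast$ is uniform it is distributed as $H_j$; hence any distinguisher for $H_{j-1}$ versus $H_j$ yields a DLWE distinguisher of equal advantage. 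A union bound over the $\ell-n=poly(n)$ steps then gives total advantage at most $(\ell-n)\cdot negl(n)=negl(n)$, which proves the claim.

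The one point requiring care---and the conceptual heart of the argument---is that all columns share the single matrix $\bm{A}$, so a priori one cannot treat the columns as independent DLWE instances drawn with fresh public matrices. The hybrid resolves this exactly because the reduction can \emph{generate} every column except the challenge column on its own, reusing the challenge's own $\bm{A}$: the uniform columns need no secret, and for the LWE columns the reduction itself chose the secrets. Thus the shared $\bm{A}$ stays consistent across the whole matrix while only one externally supplied sample-block is consumed per step. Since this is precisely Lemma 6.2 of \cite{peikert2011lossy}, one may alternatively invoke that result directly rather than reprove it.
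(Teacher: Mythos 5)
Your proof is correct. Note, however, that the paper does not actually prove this lemma: it imports it verbatim as Lemma 6.2 of \cite{peikert2011lossy} and uses it as a black box, so there is no in-paper argument to compare against. Your column-wise hybrid is the standard proof of exactly this multi-secret statement, and it is sound: the decomposition of the $k$-th column of $\bm{B}$ as $\bm{A}\hat{\bm{s}}_k^T+\bm{e}_k$ is right, and the key point --- that the reduction can reuse the challenge's own $\bm{A}$ to simulate every non-challenge column because it either needs no secret (uniform columns) or chose the secret itself (remaining LWE columns) --- is precisely what makes the shared public matrix harmless. Two cosmetic remarks: the final accounting over the $\ell-n$ hybrids is a triangle inequality over consecutive hybrid advantages rather than a union bound, and one should state explicitly that each hybrid step consumes a DLWE instance with $h=poly(n)$ samples, which is covered by the standard formulation of DLWE$_{n,q,\mathcal{X}}$ with polynomially many samples. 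Neither affects correctness; your closing observation that one may simply invoke the cited lemma is in fact what the paper does.
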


To prove the  security of the proposed scheme we consider  $h=1$. 

\begin{lemma}
Under the LWE assumption, given two distinct message vectors $\bm{m}_1,\bm{m}_2 \in \{0,1\}^{\ell-n}$, if $\ell-n $ is $\mathcal{O}(1)$ there exists no efficient algorithm that can distinguish between the distributions of the encryptions of $\bm{m}_1$ and $\bm{m}_2$. Moreover, there exists no efficient algorithm that can distinguish the uniform distribution on the set of encryptions of any given message from the uniform distribution on $\mathbb{Z}_q^\ell$
\end{lemma}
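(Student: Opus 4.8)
The plan is to first reduce the two-message indistinguishability to a single \emph{pseudorandomness} claim: I would show that for every fixed $\bm{m}$ the output of $\textsf{Encrypt}(\pi,sk,\bm{m})$ is computationally indistinguishable from the uniform distribution on $\mathbb{Z}_q^\ell$. Granting this, the first assertion follows from a standard triangle inequality, since the encryptions of $\bm{m}_1$ and $\bm{m}_2$ are then both indistinguishable from the same uniform distribution; and the second assertion (indistinguishability of encryptions from uniform) is exactly the pseudorandomness claim. So the whole lemma collapses to proving pseudorandomness of a single encryption.

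To prove pseudorandomness I would rewrite the ciphertext in block form. Using $\bm{y}\cdot\bm{S}_{\textup{enc}}=[\,\bm{y}\mid -\bm{y}\bm{S}^T\,]$ and $\bm{p}=(\bm{0},\bm{m})$, the pre-$\bm{R}$ vector is
\begin{align}
\bm{v}:=\bm{p}\cdot\left\lfloor \tfrac{q}{2}\right\rfloor+\bm{y}\cdot\bm{S}_{\textup{enc}}+\bm{e}=\left[\,\bm{y}\ \middle|\ \bm{m}\cdot\left\lfloor \tfrac{q}{2}\right\rfloor-\bm{y}\bm{S}^T+\tilde{\bm{e}}\,\right],
\end{align}
and the block form of $\bm{R}$ in Equation~\ref{R} gives $\bm{c}=\bm{v}\bm{R}$. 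The key observation is that $[\,\bm{y}\mid -\bm{y}\bm{S}^T+\tilde{\bm{e}}\,]$ is precisely a single sample in the sense of Lemma~\ref{Peikart} with $h=1$, public part $\bm{a}=\bm{y}$, secret $\hat{\bm{S}}=-\bm{S}$, and noise $\tilde{\bm{e}}\leftarrow\mathcal{X}^{\ell-n}$, hence indistinguishable from uniform on $\mathbb{Z}_q^\ell$. Adding the fixed vector $(\bm{0},\bm{m}\lfloor q/2\rfloor)$ preserves this, so $\bm{v}$ is pseudorandom; and since $\bm{R}$ is invertible (it is block lower triangular, so its determinant equals $\det\bm{R}_1\neq 0$), right-multiplication by $\bm{R}$ is a bijection carrying pseudorandom to pseudorandom and uniform to uniform.

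Concretely I would phrase this as a reduction: given a distinguisher $\mathcal{A}$ separating $\textsf{Encrypt}(\pi,sk,\bm{m})$ from uniform, build a solver $\mathcal{B}$ for Lemma~\ref{Peikart}. On a challenge $(\bm{a},\bm{b})$, $\mathcal{B}$ samples $\bm{R}$ honestly (choosing $\bm{R}_1$ full rank and $\bm{R}_2$ uniform, which is legitimate because $\bm{R}$ is independent of the challenge), forms $\bm{c}=[\,\bm{a}\mid \bm{b}+\bm{m}\lfloor q/2\rfloor\,]\bm{R}$, and returns $\mathcal{A}(\bm{c})$. When $(\bm{a},\bm{b})$ is genuine, writing $-\hat{\bm{S}}$ for its secret, $\bm{c}$ is distributed exactly as an encryption of $\bm{m}$ under a key whose matrix is $\bm{S}=-\hat{\bm{S}}$; when $(\bm{a},\bm{b})$ is uniform, $\bm{c}$ is uniform. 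Thus $\mathcal{B}$ inherits $\mathcal{A}$'s advantage, which is negligible by Lemma~\ref{Peikart}.

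The main obstacle is the distribution of $\bm{S}$. In Lemma~\ref{Peikart} the secret $\hat{\bm{S}}$ is uniform, whereas in the scheme $\bm{S}$ is \emph{structured}: its rows are the coefficients expressing $f(\bm{z}_j)$ as a linear combination of $f(\bm{z}_1),\dots,f(\bm{z}_n)$, determined by the ideal $\mathcal{I}$ and the points $\bm{z}_i$. For the reduction to yield the correct ciphertext distribution I must argue that this induced distribution on $\bm{S}$ may be replaced by, or is indistinguishable from, the uniform LWE secret. I expect this to be exactly where the hypothesis $\ell-n=\mathcal{O}(1)$ enters, bounding the number of secret rows so that the structured matrix $\bm{S}$ can be controlled; justifying this substitution rigorously is the step I would spend the most care on.
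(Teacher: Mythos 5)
Your overall architecture---reduce both claims to pseudorandomness of a single encryption via the triangle inequality, then reduce pseudorandomness to Lemma \ref{Peikart} with $h=1$---matches the paper's. But the step you defer to the end is the actual content of the paper's proof, and your sketch of how to handle it points in the wrong direction. The obstacle is not resolved by ``controlling'' the structured matrix $\bm{S}$ (determined by the ideal $\mathcal{I}$ and the points $\bm{z}_i$) because it has few rows. The paper's resolution is that the secret random matrix $\bm{R}_1$, which you treat as a mere invertible change of variables that ``carries pseudorandom to pseudorandom,'' is precisely the mechanism that randomizes the effective LWE secret: substituting $\bm{v}=\bm{y}\bm{R}_1$, the ciphertext (before applying $\bm{R}^{\prime\prime}$) becomes $\bm{p}\lfloor q/2\rfloor + (\bm{v},-\bm{v}\cdot\bm{R}_1^{-1}\bm{S}^T)+\bm{e}$, so the effective secret is $\bm{R}_1^{-1}\bm{S}^T$ rather than $\bm{S}$ itself. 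Since $\bm{R}_1$ is uniform over invertible matrices and $\bm{S}$ is assumed to have full row rank, $\bm{R}_1^{-1}\bm{S}^T$ is uniform over the set of full column rank matrices in $\mathbb{Z}_q^{n\times(\ell-n)}$, regardless of how structured the distribution of $\bm{S}$ is. The hypothesis $\ell-n=\mathcal{O}(1)$ then enters for a different purpose than you guessed: it guarantees that the full rank matrices form a constant fraction $1/\mathcal{O}(1)$ of all matrices in $\mathbb{Z}_q^{(\ell-n)\times n}$, so a distinguisher that succeeds for a non-negligible fraction of full-rank secrets also succeeds for a non-negligible fraction of uniform secrets, which is what Lemma \ref{Peikart} forbids.

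Your reduction as written (``$\bm{c}$ is distributed exactly as an encryption of $\bm{m}$ under a key whose matrix is $\bm{S}=-\hat{\bm{S}}$'') proves pseudorandomness of encryptions under a \emph{uniformly random} $\bm{S}$, which is not the scheme; without routing the argument through the $\bm{R}_1$ randomization there is no way to bridge from the ideal-derived $\bm{S}$ to the uniform secret of Lemma \ref{Peikart}. So there is a genuine gap, and it sits exactly at the point you flagged as needing the most care. The remainder of your plan---adding the fixed vector $(\bm{0},\bm{m}\lfloor q/2\rfloor)$, handling $\bm{R}_2$ separately, and the hybrid through the all-zero message and the uniform distribution---is consistent with what the paper does.
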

\begin{proof}
Note that the matrix $\bm{R}$ in the encryption process of the proposed scheme is the product of the matrices $\bm{R}^{\prime} = \left[ \begin{matrix}\bm{R}_1 & \bm{0}\\\bm{0} &\bm{I} \end{matrix} \right]$ and $\bm{R}^{\prime \prime} = \left[ \begin{matrix}\bm{I} & \bm{0}\\\bm{R}_2 &\bm{I} \end{matrix} \right]$.
Therefore, the encryption of a message $\bm{m}$ is given as
\begin{eqnarray*}
\bm{c} = \left( \bm{p} \cdot\left\lfloor \frac{q}{2} \right\rfloor  + \bm{y}\cdot\bm{S}_{\text{enc}} + \bm{e}\right)\bm{R}^{\prime}\bm{R}^{\prime \prime}
\end{eqnarray*}
where $\bm{p} = (\bm{0},\bm{m})$ and $\bm{y}$ is randomly chosen from a uniform distribution in $\mathbb{Z}_q^n$. 
Now, if we substitute $\bm{y}\bm{R}_1 = \bm{v}$, we get $\bm{c}^{\prime}=( \bm{p}\lfloor \frac{q}{2} \rfloor  + \bm{y}\cdot\bm{S}_{\text{enc}} + \bm{e})\bm{R}^{\prime} = (\bm{p}\lfloor \frac{q}{2}\rfloor + (\bm{v},-\bm{v}\cdot \bm{R}_1^{-1}\bm{S}^T) +  \bm{e})$. Here, the distribution on $\bm{v}$ is uniform in $\mathbb{Z}_q^n$ and the distribution on $\bm{R}_1^{-1}\bm{S}^T$ is uniform in the set of full column rank matrices in $\mathbb{Z}_q^{n \times (\ell-n)}$. Observe that the distribution on $((\bm{v},-\bm{v}\cdot \bm{R}_1^{-1}\bm{S}^T) +  \bm{e})$ is similar to the one considered in Lemma \ref{Peikart} (for the case $h=1$) except that there the distribution on $\bm{\hat{S}}$ is random in $\mathbb{Z}_q^{(\ell-n)\times n}$ (while here the distribution of $\bm{S}\bm{R}_1^{-T}$ is restricted to full row rank matrices in $\mathbb{Z}_q^{(\ell-n)\times n}$).

When $\ell -n$ is $\mathcal{O}(1)$,  the cardinality of the set of full rank matrices in $\mathbb{Z}_q^{ (\ell-n)\times n}$ is a significant fraction $( \frac{1}{\mathcal{O}(1)})$ of the cardinality of the set of all matrices in $\mathbb{Z}_q^{(\ell-n)\times n}$. Therefore,  if there exists an algorithm that can efficiently distinguish between the distribution of $((\bm{v},-\bm{v}\cdot \bm{R}_1^{-1}\bm{S}^T) +  \bm{e}) = ( \bm{y}\cdot\bm{S}_{\text{enc}} + \bm{e})\bm{R}^{\prime}$ from the uniform distribution on $\mathbb{Z}_q^\ell$ for a non-negligible fraction of choices of $\bm{S}$ and $\bm{R}_1$, it can also distinguish  the distribution on $(\bm{A},\bm{B})$ in Lemma \ref{Peikart} from the uniform one for a non-negligible fraction of $\bm{\hat{S}}$ (Here, non-negligible means 
$( \frac{1}{\mathcal{O}(n^c)})$ for some constant $c$).

Now, suppose there exist a non-zero message $\bm{m}$ and an algorithm $W$ that can distinguish the corresponding distribution of $\bm{c}^{\prime}=( \bm{p}\lfloor \frac{q}{2} \rfloor  + \bm{y}\cdot\bm{S}_{\text{enc}} + \bm{e})\bm{R}^{\prime}$ from the distribution of $(\bm{y}\cdot\bm{S}_{\text{enc}} + \bm{e})\bm{R}^{\prime}$ (for a non-negligible fraction of choices of $\bm{S}_{\text{enc}}$ and $\bm{R}^{\prime}$) then such an algorithm can be used to create an algorithm $W^{\prime}$ that distinguishes between the distribution on $(\bm{y}\cdot\bm{S}_{\text{enc}} + \bm{e}_i)\bm{R}^{\prime}$ and the uniform one. Let $p_W(\bm{m})$ be the probability that $W$ returns 1 when the input is sampled from the distribution on  $\bm{c}^{\prime}= ((0,\bm{m})+ \bm{y}\cdot\bm{S}_{\text{enc}} + \bm{e})\bm{R}^{\prime}$ and let $p_W(\bm{0})$ be the probability that $W$ returns 1 when the input is sampled from the distribution on $(\bm{y}\cdot\bm{S}_{\text{enc}} + \bm{e})\bm{R}^{\prime}$. Suppose $|p_W(\bm{0})-p_W(\bm{m})| > \epsilon$ for some significant value $\epsilon$. Then, either $|p_W(\bm{0})-p_W(\bm{U})|$ or $|p_W(\bm{m})-p_W(\bm{U})|$ must be greater than $\frac{\epsilon}{2}$, where $p_W(\bm{U})$ is the probability that $W$ returns 1 when the input is sampled from the uniform distribution. If $|p_W(\bm{0})-p_W(\bm{U})|>\frac{\epsilon}{2}$ then 
$W^{\prime}$ is identical to $W$. Otherwise, $W^{\prime}$ calls the algorithm $W$ after altering its input by adding $(\bm{0},\bm{m})$ to it. (These arguments are similar to the arguments in the proof Lemma 5.4 in \cite{regev2009lattices}.) Let $\bm{c}_{\bm{m}}$ denote the encryption of a message $\bm{m}$ under the secret key $\bm{S},\bm{R}_1,\bm{R}_2$. The above arguments prove that the probability of distinguishing the distribution of $\bm{c}_{\bm{m}}(\bm{R}^{\prime\prime})^{-1}$ from that of $\bm{c}_{\bm{0}}(\bm{R}^{\prime\prime})^{-1}$ for any efficient algorithm is negligible under the LWE assumption (The probability is taken over the choices of $\bm{S}$ and $\bm{R_1}$ and the randomness involved in the encryption process).

Since $\bm{R}_2$ is chosen randomly from a uniform distribution, the above arguments imply that, under the LWE assumption, there exists no efficient algorithm that can distinguish between encryptions of a non zero message $\bm{m}$ from the encryptions of the $\bm{0}$ vector. Now, for an algorithm $\hat{W}$, let $p_{\hat{W}}(\bm{m})$ denote the probability of $\hat{W}$ returning $1$ when the input is sampled from the distribution on the encryptions of $\bm{m}$. Clearly, for two distinct message vectors $\bm{m}_1$ and $\bm{m}_2$
\begin{equation}
    |p_{\hat{W}}(\bm{m}_1)-p_{\hat{W}}(\bm{m}_2)| \leq |p_{\hat{W}}(\bm{m}_1)-p_{\hat{W}}(\bm{0})| + |p_{\hat{W}}(\bm{m}_2)-p_{\hat{W}}(\bm{0})|
\end{equation}
 Since, under the LWE assumption, both the terms on the right hand side of the above equation are negligible for all efficient algorithms, the value of the term on the left hand side is also negligible. Further, since there exists no efficient algorithm $W$ such that $|p_W(\bm{0})-p_W(\bm{U})|$ is non-negligible, there exists no efficient algorithm $\hat{W}$ such that $|p_{\hat{W}}(\bm{0})-p_{\hat{W}}(\bm{U})|$ is non-negligible. For any message $\bm{m}$ and algorithm $\hat{W}$
 
 \begin{equation}
    |p_{\hat{W}}(\bm{m})-p_{\hat{W}}(U)| \leq |p_{\hat{W}}(\bm{m})-p_{\hat{W}}(\bm{0})| + |p_{\hat{W}}(\bm{0})-p_{\hat{W}}(U)|.
\end{equation}
 Therefore, under the LWE assumption, there exists no efficient algorithm $\hat{W}$ such that $|p_{\hat{W}}(\bm{m})-p_{\hat{W}}(U)|$ is non-negligible. In other words, there exists no efficient algorithm that can distinguish the distribution on the encryptions of a message $\bm{m}$ from the uniform distribution on $\mathbb{Z}_q^\ell$.
\qed 
  
\end{proof} 

\subsection{Homomorphic Properties}\label{HP}
 
  The proposed scheme can be used to homomorphically evaluate a function $\phi:\{0,1\}^{\tau(\ell-n)} \rightarrow \{0,1\}^{\ell-n}$ on ciphertexts $\bm{c}_1,\hdots,\bm{c}_\tau$ such that $\phi(\bm{c}_1,\hdots,\bm{c}_\tau)$ yields a ciphertext $\bm{c}_{\phi}$. In the proposed scheme, $\phi$ represents an arithmetic circuit over $GF(2)$ with addition and multiplication gates. We now show how to perform homomorphic addition and multiplication of two ciphertexts in the proposed scheme.

\subsubsection{Addition.} Addition is performed by simply adding the ciphertexts. For some $\bm{y}_1,\bm{y}_2 \in \mathbb{Z}_q^n$, if \(\bm{c}_1=\left(\bm{p}_1 \left\lfloor \frac{q}{2} \right\rfloor+\bm{y}_1 \cdot \bm{S}_{\text{enc}}+\bm{e}_1\right)\bm{R}~{mod}~q\) and \(\bm{c}_2= \left(\bm{p}_2\left\lfloor \frac{q}{2} \right\rfloor+\bm{y}_2 \cdot \bm{S}_{\text{enc}}+\bm{e}_2\right)\bm{R}~{mod}~q\) denote the respective encryptions of $\bm{m}_1$ and $\bm{m}_2$, then compute 
\begin{align}
 \bm{c}_{ add} &=\bm{c}_1+  \bm{c}_2~~({mod}~q) \notag \\
  & = \left((\bm{p}_1+\bm{p}_2)\left\lfloor \frac{q}{2} \right\rfloor+(\bm{y}_1+\bm{y}_2)\cdot \bm{S}_{\text{enc}}+\bm{e}_1+\bm{e}_2\right)\bm{R}~({mod}~q)
\end{align}
where $\bm{e}_i=(\bm{0},\tilde{\bm{e}}_{i})$ for some $ \tilde{\bm{e}}_{i} \leftarrow \mathcal{X}^{\ell-n} $ for $i \in \{1,2\}$. 
If $\tilde{\bm{e}}_{add}:=\tilde{\bm{e}}_1+\tilde{\bm{e}}_2$, then 
{\begin{align}
\bm{c}_{ add}\cdot\bm{S}_{\text{dec}} &=({\bm{m}}_1+{\bm{m}}_2) \left\lfloor \frac{q}{2} \right\rfloor+ \tilde{\bm{e}}_{add}~({mod}~q) \notag \\
 &=({\bm{m}}_1 \oplus {\bm{m}}_2) \left\lfloor \frac{q}{2} \right\rfloor -  \frac{1}{2}  [{\bm{m}}_1+{\bm{m}}_2-({\bm{m}}_1 \oplus {\bm{m}}_2)] +\tilde{\bm{e}}_{add}~({mod}~q) 
\end{align}}
 If \(\bm{e}_{add}:=-  \frac{1}{2}  [{\bm{m}}_1+{\bm{m}}_2-({\bm{m}}_1 \oplus {\bm{m}}_2)] +\tilde{\bm{e}}_{add}\) and $\norm{{\bm{e}}_1}_{\infty}, \norm{{\bm{e}}_2}_{\infty} \leq B $, then the magnitude of the noise after addition can be computed as
 \begin{align}
 \norm{\bm{e}_{add}}_{\infty}=\norm{-  \frac{1}{2}  [{\bm{m}}_1+{\bm{m}}_2-({\bm{m}}_1 \oplus {\bm{m}}_2)]+( \tilde{\bm{e}}_1+\tilde{\bm{e}}_2)}_{\infty} \leq 1+2B
 \end{align}

\subsubsection{Multiplication.} \label{mult} 

Given two ciphertexts $\bm{c}_1$ and $\bm{c}_2$ that encrypts the messages $\bm{m}_1$ and $\bm{m}_2$, homomorphic multiplication is performed by using a bilinear map on $\bm{c}_1$ and $\bm{c}_2$. 
This map is represented by a 3-way tensor $\mathbfcal{M}$ which is provided as the public evaluation key for multiplication. We now proceed to construct $\mathbfcal{M}$. All operations are performed over  $\mathbb{Q}$  unless stated otherwise. 
For some $x \in \mathbb{Q}$, $y=x~({mod}~q)$ denotes the unique value in the interval $(-q/2,q/2]$.  

Homomorphic multiplication in this scheme uses the fact that given two polynomials $f_1,f_2 \in \mathcal{I}_{\leq r}$ and an evaluation point $\bm{z} \in \mathbb{Z}_q^v$, 
$f_1(\bm{z}) \cdot f_2(\bm{z})=\left(f_1 \cdot f_2\right) (\bm{z})$.
Although $f_1f_2 \in \mathcal{I}$ it need not be an element of the subspace $\mathcal{I}_{\leq r}$. Instead, it is an element of the space $\mathcal{I}_{\leq 2r}$. Let $n_1$ denote the dimension of the subspace $\mathcal{I}_{\leq 2r}$ and $t  = n_1 +\ell -n$. We choose $(n_1-n)$ additional points $\bm{z}_{\ell+1},\hdots,\bm{z}_t$ in $\mathbb{Z}_q^v$ such that every vector in $\mathbb{Z}_q^{n_1}$ can be obtained by evaluating a polynomial in $\mathcal{I}_{\leq 2r}$ at $(\bm{z}_1,\hdots,\bm{z}_n,\bm{z}_{\ell+1},\hdots,\bm{z}_t)$. Evaluating polynomials in  $\mathcal{I}_{\leq 2r}$ on the points $\bm{z}_1,\bm{z}_2,\ldots,\bm{z}_t$ yields an $n_1$ dimensional subspace of $\mathbb{Z}_q^t$


Let $f_{mult}$ be a polynomial in $\mathcal{I}_{\leq r} $ and $(f_{mult}(\bm{z}_1),\hdots,f_{mult}(\bm{z}_\ell))=\bm{y}_{mult} \cdot \bm{S}_{\text{enc}}$ for some $\bm{y}_{mult} \in \mathbb{Z}_q^n$. Then, given encryptions of $\bm{m}_1$ and $\bm{m}_2$ viz. $\bm{c}_1$ and $\bm{c}_2$, our aim is to get a ciphertext of the form 
{ \begin{align}
\bm{c}_{mult}=\left((\bm{p}_1 \odot \bm{p}_2)\cdot \left\lfloor \frac{q}{2} \right\rfloor+ \bm{y}_{mult}\cdot \bm{S}_{\text{enc}}+\bm{e}_{mult} \right)\bm{R}~mod~q       
\in \mathbb{Z}_q^\ell
\end{align}}

For $i := \{1,2\}$, ciphertexts $\bm{c}_i$ that encrypt messages $\bm{m}_i$ are given as
\begin{align}
    \bm{c}_i = \!\begin{bmatrix} \begin{matrix}
    f_i(\bm{z}_1) \\ \vdots \\ f_i(\bm{z}_n)\end{matrix} \\ 
    m_{i,n+1}  \left\lfloor \frac{q}{2} \right\rfloor +f_i(\bm{z}_{n+1}) +e_{i,n+1} \\
    \vdots \\
    m_{i,\ell}  \left\lfloor \frac{q}{2} \right\rfloor +f_i(\bm{z}_{\ell}) +e_{i,\ell}
    \end{bmatrix}^T\bm{R}~(mod~q)
\end{align}
where $f_i$s are polynomials that are randomly sampled from $\mathcal{I}_{\leq r}$.

The process of homomorphically multiplying $\bm{c}_1$ and $\bm{c}_2$ is done through the following steps. 
\begin{enumerate}
\setlength\itemsep{1ex}
\item For some $K_{1,j},K_{2,j} \in \mathbb{Z}$, transform the ciphertexts $\bm{c}_1$ and $\bm{c}_2$ to  vectors of the form,

{\small  \begin{flalign} \label{step1}
& \begin{aligned} 
\tilde{\bm{c}}_1 \!=\!\!\begin{bmatrix}
f_1(\bm{z}_1) \\
\vdots \\
f_1(\bm{z}_n)\\
m_{1,n+1}  \left\lfloor \frac{q}{2} \right\rfloor  +e_{1,n+1}^{\prime}+qK_{1,n+1} \\
\vdots \\
m_{1,\ell} \left\lfloor \frac{q}{2} \right\rfloor +e_{1,\ell}^{\prime} +qK_{1,\ell}
\end{bmatrix}^T
\end{aligned},
& \!\!\! 
\!\! \begin{aligned}
\!\! \tilde{\bm{c}}_2\!=\!\!\begin{bmatrix}
f_2(\bm{z}_1) \\
\vdots \\
f_2(\bm{z}_n)\\
m_{2,n+1} \left\lfloor \frac{q}{2} \right\rfloor+e_{2,n+1}^{\prime}+qK_{2,n+1} \\
\vdots \\
m_{2,\ell} \left\lfloor \frac{q}{2} \right\rfloor+e_{2,\ell}^{\prime}+qK_{2,\ell} 
\end{bmatrix}^T\!\!\in \mathbb{Q}^\ell
\end{aligned}
\end{flalign}}
(Note that the noise terms $e_{i,j}$ have transformed to $e_{i,j}^{\prime}$. This is because we deliberately introduce some noise in this step. The reason for the same is explained later in the paper.)
\item For some $K_{1,j},K_{2,j} \in \mathbb{Z}$ for $\ell+1 \leq j \leq t$, compute values that are equivalent mod $q$ to evaluations of $f_1$ and $f_2$ at the additional points, $\bm{z}_{\ell+1},\hdots,\bm{z}_t$ and append these entries to $\tilde{\bm{c}}_i$ to generate vectors 
$\bm{c}_1^{\prime}$ and $\bm{c}_2^{\prime}$ where

\vspace{-1ex}
{\small  \begin{flalign} \label{step2}
& \begin{aligned} 
{\bm{c}}_1^{\prime} \!=\!\!\begin{bmatrix}
f_1(\bm{z}_1) \\
\vdots \\
f_1(\bm{z}_n)\\
m_{1,n+1}  \left\lfloor \frac{q}{2} \right\rfloor  +e_{1,n+1}^{\prime}+qK_{1,n+1} \\
\vdots \\
m_{1,\ell} \left\lfloor \frac{q}{2} \right\rfloor +e_{1,\ell}^{\prime} +qK_{1,\ell}\\
f_1(\bm{z}_{\ell+1})+qK_{1,\ell+1} \\
\vdots \\
f_1(\bm{z}_t)+qK_{1,t}
\end{bmatrix}^T
\end{aligned},
& \!\!\! 
\!\! \begin{aligned}
\!\! {\bm{c}}_2^{\prime}\!=\!\!\begin{bmatrix}
f_2(\bm{z}_1) \\
\vdots \\
f_2(\bm{z}_n)\\
m_{2,n+1} \left\lfloor \frac{q}{2} \right\rfloor+e_{2,n+1}^{\prime}+qK_{2,n+1} \\
\vdots \\
m_{2,\ell} \left\lfloor \frac{q}{2} \right\rfloor+e_{2,\ell}^{\prime}+qK_{2,\ell} \\
f_2(\bm{z}_{\ell+1})+qK_{2,\ell+1} \\
\vdots \\
f_2(\bm{z}_t)+qK_{2,t}
\end{bmatrix}^T \in \mathbb{Q}^t
\end{aligned}
\end{flalign}}

\item Take component-wise product of $\bm{c}_1^{\prime}$ and $\bm{c}_2^{\prime}$ and multiply the entries containing the message with $2/q$ to get  

{\small \begin{align} \label{cmult_dash}
\bm{c}_{mult}^{\prime}\!=\!\!\begin{bmatrix}
f_1f_2(\bm{z}_1) \\
\vdots \\
f_1f_2(\bm{z}_n)\\
\frac{2}{q}\! \left(m_{1,n+1} \left\lfloor \frac{q}{2} \right\rfloor \!+{e}_{1,n+1}^{\prime}+\!qK_{1,n+1}\right)\!\left(m_{2,n+1} \left\lfloor \frac{q}{2} \right\rfloor \!+ {e}_{2,n+1}^{\prime}+\!qK_{2,n+1}\!\right) \!\\
\vdots \\
\frac{2}{q}  \left(m_{1,\ell} \left\lfloor \frac{q}{2} \right\rfloor +{e}_{1,\ell}^{\prime}+qK_{1,\ell}\right)\!\left(m_{2,\ell} \left\lfloor \frac{q}{2} \right\rfloor + {e}_{2,\ell}^{\prime}+qK_{2,\ell}\right) \\
f_1f_2(\bm{z}_{\ell+1})+qK_{\ell+1} \\
\vdots \\
f_1f_2(\bm{z}_{t})+qK_{t}
\!\!\end{bmatrix}^T 
\end{align}}
where $K_j \in \mathbb{Z}$ for $\ell+1 \leq j \leq t$.

\item Add integers equivalent to $f_1f_2(\bm{z}_{j})~mod~q$ to the $j^{th}$ entries of $\bm{c}_{mult}^{\prime}$ for $n+1 \leq j \leq \ell$. Let the resultant vector be $ \bm{c}_{mult}^{\prime\prime} \in \mathbb{Q}^t$ such that for $n+1 \leq j \leq \ell$ 

{\small  \begin{align} \label{c_mult2}
 \bm{c}_{mult}^{\prime\prime}(j)= \frac{2}{q}  \left(m_{1,j} \left\lfloor \frac{q}{2} \right\rfloor +{e}_{1,j}^{\prime}+qK_{1,j}\right)\!\left(m_{2,j} \left\lfloor \frac{q}{2} \right\rfloor + {e}_{2,j}^{\prime}+qK_{2,j}\right)+f_1f_2(\bm{z}_{j})+qK_j
\end{align}}
where $K_j \in \mathbb{Z}$ for $n+1 \leq j \leq \ell$. The remaining entries of $ \bm{c}_{mult}^{\prime\prime} $ are the same as that of $ \bm{c}_{mult}^{\prime} $.

\item Transform the vector $ \bm{c}_{mult}^{\prime\prime} \in \mathbb{Q}^t$ to a valid ciphertext $\bm{c}_{mult}$ of size $\ell$ over $\mathbb{Z}_q$.

\end{enumerate}

We now explain in detail how each of the above steps is performed. 

\vspace{1ex}

\textbf{Step 1:} The first step is to transform the ciphertexts $\bm{c}_1$ and $\bm{c}_2$ to the vectors $\tilde{\bm{c}}_1$ and $\tilde{\bm{c}}_2$ given in Equation (\ref{step1}). For $i:=\{1,2\}$, let $\bm{D}_i$s be matrices given by
\begin{align}
\bm{D}_i&=\bm{R}^{-1} \cdot \left[
\begin{array}{c|c}
{\bm{I}}_n &~ \bm{S}^T  \\
\hline
\bm{0} &~ \bm{I}_{\ell-n}
\end{array}
\right] + \left[\begin{array}{c|c}
\bm{0} &~ \bm{\epsilon}_i \\
\hline
\bm{0} &~ \bm{0}
\end{array}\right]\\
&=\left[
\begin{array}{c|c}
{\bm{R}_1}^{-1} &~ {\bm{R}_1}^{-1}\bm{S}^T  + \bm{\epsilon}_i \\
\hline
-\bm{R}_2{\bm{R}_1}^{-1} &~ -\bm{R}_2{\bm{R}_1}^{-1}\bm{S}^T +\bm{I}_{\ell-n}
\end{array}
\right] \in \mathbb{Q}^{\ell \times \ell}
\end{align}
where $\bm{\epsilon}_i$s are matrices in $\mathbb{Q}^{n \times (\ell-n)}$ such that the one norm of each of  their columns is less that ${B}/{q}$. Observe that, for $i:=\{1,2\}$ and $n+1 \leq j \leq \ell$, $\langle \bm{c}_i\bm{R}^{-1},\tilde{\bm{s}}_j \rangle =m_{i,j} \left\lfloor \frac{q}{2} \right\rfloor +{e}_{i,j}+qK_{i,j}$, where $K_{1,j}, K_{2,j} \in \mathbb{Z} $ and the $\tilde{\bm{s}}_j $s are as given in Equation (\ref{sk}). Therefore, for $i:=\{1,2\}$,  $\tilde{\bm{c}}_i = \bm{c}_i\bm{D}_i$  and the error terms $e_{i,j}^{\prime}$ are given by
$e_{i,j}^{\prime} = e_{i,j} + \langle \bm{c}_i, (\bm{\epsilon}_i(:,j),\bm{0}) \rangle$ where $\bm{0} \in \mathbb{Q}^{\ell-n}$ is the all zero vector. Both terms in the right hand side of this equation are bounded by $B$.

\vspace{1em}

\textbf{Step 2:}
Evaluation of polynomials in $\mathcal{I}_{\leq r}$ at $\bm{z}_1,\hdots,\bm{z}_t$ constitutes an $n$-dimensional subspace of $\mathbb{Z}_q^t$. Therefore, for $i \in \{1,2\}$ and $\ell+1 \leq k \leq t$,  there exists $ \bm{\alpha}_k:=(\alpha_{k}^1,\hdots,\alpha_{k}^n) \in \mathbb{Z}_q^n $ such that
\begin{align} \label{alpha}
f_{i}(\bm{z}_k):=\sum_{j=1}^n \alpha_{k}^j \cdot f_{i}(\bm{z}_j)~(mod~q)
\end{align}
Consequently, $\bm{c}_i^{\prime}= \tilde{\bm{c}}_i\cdot \bm{A} \in \mathbb{Q}^t$ (this multiplication is performed by considering the elements of $\bm{A}$ to be in $\mathbb{Q}$) where $\bm{A}$ is given by
\begin{align}
\bm{A}:=\left[
\begin{array}{c|c|c}
\bm{I}_n &~ \bm{0} &~ \begin{matrix}
                        \alpha_{\ell+1}^1  ~&~ \hdots ~&~ \alpha_{t}^1 \\
                        \vdots ~&~  \ddots ~&~ \vdots \\
                        \alpha_{\ell+1}^n  ~&~ \hdots ~&~ \alpha_t^n
                        \end{matrix}\\
\midrule
\bm{0} &~ \bm{I}_{\ell-n} &~ \bm{0}
\end{array}
\right] \in \mathbb{Z}_q^{\ell \times t}
\end{align}

\vspace{1em}
\textbf{Step 3:} $\bm{c}_{mult}^{\prime}\in \mathbb{Q}^t$  in Equation (\ref{cmult_dash}) is obtained from $\bm{c}_1^{\prime}$ and $\bm{c}_2^{\prime}$ by taking their element-wise product and then multiplying the last $\ell-n$ elements by $2/q$.
This operation can be done by evaluating a tensor $\mathbfcal{U} \in \mathbb{Q}^{t \times t\times t}$ on $\bm{c}_1^{\prime}$ and $\bm{c}_2^{\prime}$. For $1 \leq i \leq t$, let $\bm{U}_i$ denote the $i$-th frontal slice of $\mathbfcal{U}$.  For $1 \leq i \leq n$ and $\ell+1 \leq i \leq t$, the corresponding matrix $\bm{U}_i$, has  1 in the $(i,i)$-th position  and 0 everywhere else.  For $n+1 \leq i \leq \ell$,  $\bm{U}_i(i,i)=2/q$ and $\bm{U}_i(i,j)=0$ when $i \neq j$. Therefore,
\begin{align}
    \bm{c}_{mult}^\prime=\begin{bmatrix}
    \bm{c}_1^\prime \bm{U}_1 {\bm{c}_2^\prime}^{T} & ~\hdots~ & \bm{c}_1^\prime \bm{U}_t {\bm{c}_2^\prime}^T
    \end{bmatrix} \in \mathbb{Q}^t
\end{align}

Given $ \tilde{\bm{c}}_1 $ and $ \tilde{\bm{c}}_2 $ from Step 1,  $ \bm{c}_{mult}^{\prime} $ is obtained by evaluating the following tensor $\mathbfcal{T} \in \mathbb{Q}^{\ell \times \ell \times t}$ on  $\tilde{\bm{c}}_1 $ and $ \tilde{\bm{c}}_2 $
\begin{align}
\mathbfcal{T}=\mathbfcal{U} \times_1 \bm{A} \times_2 \bm{A} \in \mathbb{Q}^{\ell \times \ell \times t}
\end{align}
For example, if $n=2$ and $\ell=4$, then the frontal slices of $\mathbfcal{T}$ for $1 \leq i \leq \ell$ can be represented by the following matrices: 
{ \begin{align} \label{M1,M2}
\bm{T}_{1} =\begin{bmatrix}
1 & 0 & 0 & 0\\
0 & 0 & 0 & 0\\
0 & 0 & 0 & 0\\
0 & 0 & 0 & 0
\end{bmatrix},~
\bm{T}_{2} =\begin{bmatrix}
0 & 0 & 0 & 0\\
0 & 1 & 0 & 0\\
0 & 0 & 0 & 0\\
0 & 0 & 0 & 0
\end{bmatrix},~
\bm{T}_{3} =\begin{bmatrix}
0 & 0 & 0 & 0\\
0 & 0 & 0 & 0\\
0 & 0 & \frac{2}{q} & 0\\
0 & 0 & 0 & 0
\end{bmatrix},~
\bm{T}_{4} =\begin{bmatrix}
0 & 0 & 0 & 0\\
0 & 0 & 0 & 0\\
0 & 0 & 0 & 0\\
0 & 0 & 0 & \frac{2}{q}
\end{bmatrix} 
\end{align}}
 For $\ell+1 \leq i \leq t$, ${\bm{T}}_i $ is a matrix of the form
\begin{align}
\bm{T}_i:=\left[
\begin{array}{c|c}
\begin{matrix}
(\alpha_i^1)^2 ~&~ \alpha_i^1\alpha_i^2 ~&~ \hdots ~&~ \alpha_i^1\alpha_i^n \\
\vdots ~&~ \vdots ~&~ \ddots ~&~ \vdots \\
\alpha_i^1\alpha_i^n ~&~ \alpha_i^2\alpha_i^n ~&~ \hdots ~&~ (\alpha_i^n)^2
\end{matrix} &~ {0}^{n \times (\ell-n)} \\
\midrule
{0}^{(\ell-n) \times n} &~ {0}^{(\ell-n)\times(\ell-n)}
\end{array}
\right]\in \mathbb{Q}^{\ell \times \ell}
\end{align}
Therefore, given $ \tilde{\bm{c}}_1 $ and $ \tilde{\bm{c}}_2 $ from Step 1, 
we can compute the vector $\bm{c}_{mult}^{\prime}$ as:
\begin{align} \label{c_mult1}
\bm{c}_{mult}^{\prime}:=\tilde{\bm{c}}_1 {\mathbfcal{T}} \tilde{\bm{c}}_2^T= \begin{bmatrix} \tilde{\bm{c}}_1 \bm{T}_1 \tilde{\bm{c}}_2^T &~ \cdots ~& \tilde{\bm{c}}_1 \bm{T}_t \tilde{\bm{c}}_2^T\end{bmatrix} \in \mathbb{Q}^t
\end{align}

\vspace{1em}

\textbf{Step 4:} 
The evaluations of polynomials in $\mathcal{I}_{\leq 2r}$ on $\bm{z}_1,\bm{z}_2,\hdots,\bm{z}_t$ constitute an $n_1$-dimensional subspace of $\mathbb{Z}_q^t$. Because of the way in which the points $(\bm{z}_1,\hdots,\bm{z}_n,\bm{z}_{\ell+1},\hdots,\bm{z}_t)$ are chosen, the evaluations of $f_1f_2 \in \mathcal{I}_{\leq 2r}$ at the points $(\bm{z}_{n+1}, \hdots,\bm{z}_\ell)$ can be written as a linear combination of its evaluations at $(\bm{z}_1,\hdots,\bm{z}_n,\bm{z}_{\ell+1},\hdots,\bm{z}_t)$. Therefore, for some $(\beta_{j}^1,\hdots,\beta_{j}^n,\beta_j^{\ell+1},\hdots,\beta_j^t) \in \mathbb{Z}_q^{n_1}$, $n+1 \leq j \leq \ell$, \begin{align}
f_1f_2(\bm{z}_j):=\sum_{i=1}^n \beta_{j}^i \cdot f_1f_2(\bm{z}_i)+\!\!\!\sum_{i=\ell+1}^t \beta_{j}^i \cdot f_1f_2(\bm{z}_i)~(mod~q)
\end{align}
 Let $\bm{B}_1$ be a matrix of size $n \times (\ell-n) $ such that $\bm{B}_1(i,j)=\beta_j^i$ for $1 \leq i \leq n$ and $n+1 \leq j \leq \ell$. Similarly, let $\bm{B}_2$ be a matrix of size $ (t-\ell)\times(\ell-n)$ such that $\bm{B}_2(i,j)=\beta_j^i$ for $n+1 \leq j \leq \ell$ and $\ell+1 \leq i \leq t$. Consider the following block matrix 
 \begin{align}
 \bm{B}=\left[
\begin{array}{c|c|c}
{\bm{I}}_n &~ \bm{B}_1 &~\bm{0} \\
\hline
\bm{0} &~ \bm{I}_{\ell-n} &~\bm{0} \\
\hline
\bm{0} &~ \bm{B}_2 &~ \bm{I}_{t-\ell}
\end{array}
\right] \in \mathbb{Z}_q^{t \times t}
 \end{align}
  $\bm{c}_{mult}^{\prime\prime}$ is obtained by multiplying $\bm{c}_{mult}^{\prime}$ with the matrix $\bm{B}$ i.e. $\bm{c}_{mult}^{\prime\prime}=\bm{c}_{mult}^{\prime} \cdot \bm{B}$ (considering the elements of $\bm{B}$ to be in $\mathbb{Q}$).
\vspace{1em}

\textbf{Step 5:} In order to transform the vector $ \bm{c}_{mult}^{\prime\prime} $ generated in the above step to a valid ciphertext $\bm{c}_{mult}$ we first define a map from $\mathcal{I}_{\leq 2r}$ to $\mathcal{I}_{\leq r}$ as follows. 

Let LM$(\mathcal{I})$ be the set of leading monomials of elements of $\mathcal{I}$. Let LM$(\mathcal{I})_{r+1} := \{\mu_1,\mu_2,\ldots,\mu_M\}$ be the set of  all monomials of degree $r+1$ in LM$(\mathcal{I})$.  For each monomial $\mu_i$, choose a polynomial $g_i\in\mathcal{I}$ such that the leading term of $g_i$ is $\mu_i$ for $1 \leq i \leq M$. Let $\mathcal{G}:=\{ g_1,\hdots,g_{M} \}$ be the set of these polynomials. Given $f \in \mathcal{I}_{\leq 2r}$, serially divide $f$ by the set of polynomials $\mathcal{G}$ using the degree reverse lexicographic order (At each step divide the remainder obtained in the previous step by the next $g_i$). Let $f_{\mathcal{G}}$ be the final remainder obtained. Note that the map from $f$ to $f_{\mathcal{G}}$ is a linear one.


The above linear map from $\mathcal{I}_{\leq 2r}$ to $\mathcal{I}_{\leq r}$ naturally gives rise to the following linear map $\mathcal{L}$ from the evaluations of polynomials in $ \mathcal{I}_{\leq 2r}$ at $(\bm{z}_1,\hdots,\bm{z}_t)$ to the evaluations of polynomials in $ \mathcal{I}_{\leq r}$ at $(\bm{z}_1,\hdots,\bm{z}_\ell)$. 
\begin{equation*}
    \mathcal{L} \left(f(\bm{z}_1),f(\bm{z}_2),\ldots,f(\bm{z}_t)\right) = \left(f_{\mathcal{G}}(\bm{z_1}),f_{\mathcal{G}}(\bm{z}_2),\ldots,f_{\mathcal{G}}(\bm{z}_\ell)\right)
\end{equation*}

Let $(f^1,\hdots,f^{n_1})$ be a basis for $\mathcal{I}_{\leq 2r}$ and let $(f_{\mathcal{G}}^1,\hdots,f_{\mathcal{G}}^{n_1})$ be the respective remainders after serially dividing by the elements of ${\mathcal{G}}$. Let $\bm{F}_1 \in \mathbb{Z}_q^{ n_1 \times t}$ and $\bm{F}_2\in \mathbb{Z}_q^{ n_1 \times \ell}$ be the following matrices,
\begin{align}
~~~\bm{F}_1:=\begin{bmatrix}
f^1(\bm{z}_1) & \hdots & f^1(\bm{z}_\ell) & \hdots & f^1(\bm{z}_t)  \\
\vdots        & \ddots & \vdots           & \ddots & \vdots\\
f^{n_1}(\bm{z}_1) & \hdots & f^{n_1}(\bm{z}_\ell) & \hdots & f^{n_1}(\bm{z}_t)\\
\end{bmatrix},
\end{align}
  
\begin{align}
\bm{F}_2:= \begin{bmatrix}
f_{\mathcal{G}}^1(\bm{z}_1) & \hdots & f_{\mathcal{G}}^1(\bm{z}_{\ell}) \\
\vdots & \ddots & \vdots \\
f_{\mathcal{G}}^{n_1}(\bm{z}_1) & \hdots & f_{\mathcal{G}}^{n_1}(\bm{z}_\ell)
\end{bmatrix}
\end{align}
Every $\bm{Q}$ that satisfies the equation $ \bm{F}_1 \cdot \bm{Q}= \bm{F}_2~({mod}~q)$, defines a map $\mathcal{L}_Q$ from $\mathbb{Z}_q^t$ to $\mathbb{Z}_q^{\ell} $ which when restricted to the evaluations of $\mathcal{I}_{\leq 2r}$ results in $\mathcal{L}$. In particular, there exist solutions $\bm{Q}$ that have the following structure,
\begin{align}
\bm{Q}=\left[
\begin{array}{c|c|c}
\bm{Q}_1^{n \times n} &~ {0}^{n\times(\ell-n)} ~& \bm{Q}_2^{n \times(t-\ell)} \\
\midrule
\bm{Q}_3^{(\ell-n)\times n} & \bm{I}_{\ell-n} & \bm{Q}_4^{(\ell-n)\times(t-\ell)}
\end{array}
\right]^T \in \mathbb{Z}_q^{t \times \ell}
\end{align}

Now, when $\bm{c}_{mult}^{\prime\prime}$ is multiplied  with the matrix $\bm{Q}$ (considering the elements of $\bm{Q}$ to be in $\mathbb{Q}$) we get the following vector $ \tilde{\bm{c}}_{mult}$
\begin{equation}\label{cmult_tilde}
\resizebox{.922\hsize}{!}{
$\tilde{\bm{c}}_{mult} 
\!=\!\!\begin{bmatrix}
f_{mult}(\bm{z}_1)+qK_1^\prime \\
\vdots \\
f_{mult}(\bm{z}_n)+qK_n^\prime\\
\!\frac{2}{q}\! \left(m_{1,n+1} \!\left\lfloor \frac{q}{2} \right\rfloor \!+\!{e}_{1,n+1}^\prime\!+\!qK_{1,n+1}\right)\!\!\left(m_{2,n+1}\! \left\lfloor \frac{q}{2} \right\rfloor \!+\! {e}_{2,n+1}^\prime\!+\!qK_{2,n+1}\!\right)\!+\!f_{mult}(\bm{z}_{n+1})+qK_{n+1}^\prime\! \\
\vdots \\
\frac{2}{q}  \left(m_{1,\ell} \left\lfloor \frac{q}{2} \right\rfloor +{e}_{1,\ell}^\prime+qK_{1,\ell}\right)\!\left(m_{2,\ell} \left\lfloor \frac{q}{2} \right\rfloor + {e}_{2,\ell}^\prime+qK_{2,\ell}\right)\!+\!f_{mult}(\bm{z}_{\ell})+qK_{\ell}^\prime\!
\end{bmatrix}^T $}
\end{equation}
for some $K_j^\prime \in \mathbb{Z}$ for $1 \leq j \leq \ell$. Here $f_{mult}$ denotes the  remainder obtained after serially dividing $f_1f_2$ by the elements of $\mathcal{G}$.
 Multiplying $ \tilde{\bm{c}}_{mult} $ by $\bm{R}$ gives us a vector in $\mathbb{Q}^\ell$ which when rounded to the closest integer vector is equivalent $ mod~q$ to a vector which gives $\bm{m}_1 \odot \bm{m}_2$ on decryption i.e,
\begin{align}\label{cmult}
\bm{c}_{mult}=\left\lfloor \tilde{\bm{c}}_{mult} \cdot \bm{R} \right\rfloor ~{mod}~q \in \mathbb{Z}_q^\ell
\end{align}

The sequence of steps that transform the pair $\bm{c}_1,\bm{c}_2$ to $ \tilde{\bm{c}}_{mult} \bm{R}$ constitute a bilinear map from $\mathbb{Q}^\ell \times \mathbb{Q}^\ell$ to $\mathbb{Q}^\ell$. This map, denoted by $\mathcal{B}_{\mathbfcal{M}}$, can be represented by a 3-way tensor $\mathbfcal{M}$ where $\mathbfcal{M}$ is given by
\begin{align} \label{M2}
{\mathbfcal{M}} = \mathbfcal{T} \times_1 \bm{D}_1 \times_2 \bm{D}_2 \times_3 (\bm{R}^T\bm{Q}^T\bm{B}^T) \in \mathbb{Q}^{\ell \times \ell \times \ell}
\end{align}
The tensor $ \mathbfcal{M} $ is the evaluation key for multiplication. If $\bm{M}_1,\hdots,\bm{M}_\ell$ denote the frontal slices of $\mathbfcal{M}$, then using $\mathbfcal{M}$, $\bm{c}_{mult}$ can be computed as:
\begin{align} \label{c_mult}
\bm{c}_{mult}=\left\lfloor \mathcal{B}_{\mathbfcal{M}} (\bm{c}_1 ,{\bm{c}}_2)\right\rfloor\,{mod}\,q=\!\begin{bmatrix}\left\lfloor\bm{c}_1 {\bm{M}_1} {\bm{c}}_2^T \right\rfloor &~ \cdots ~&\left\lfloor\bm{c}_1 {\bm{M}_\ell} {\bm{c}}_2^T \right\rfloor\end{bmatrix}\,{mod}\,q\in \mathbb{Z}_q^\ell
\end{align}



\subsubsection*{Correctness}
If $f_{mult}(\bm{Z})=\bm{y}_{mult}\cdot \bm{S}_{\text{enc}}$ for some $\bm{y}_{mult} \in \mathbb{Z}_q^n$ and $\bm{e}_{mult}=(0,\hdots,0,e_{mult,n+1},\hdots,e_{mult,\ell})$ denotes the noise vector, 
then the resultant ciphertext after the multiplication process can be written as
\begin{align}
\bm{c}_{mult}= \left((\bm{p}_1\odot \bm{p}_2)\left\lfloor\frac{q}{2} \right\rfloor+ \bm{y}_{mult}\cdot \bm{S}_{\text{enc}}+{\bm{e}}_{mult} \right) \cdot \bm{R}~{mod}~q \in \mathbb{Z}_q^\ell
\end{align}
 Then $\bm{c}_{mult}\cdot\bm{S}_{\text{dec}}=(\bm{m}_1\odot \bm{m}_2)\left\lfloor\frac{q}{2} \right\rfloor+\tilde{\bm{e}}_{mult}~({mod}~q) $ where $\tilde{\bm{e}}_{mult}=(e_{mult,n+1},\hdots,e_{mult,\ell})$. The decryption process outputs $ \bm{m}_1\odot \bm{m}_2$ if $ \norm{{\bm{e}}_{mult}}_{\infty} < \left\lfloor {q}/{2} \right\rfloor/2$.
 
\subsubsection*{Security with multiplicative homomorphism}

The entries of the tensor $\mathbfcal{M} \in \mathbb{Q}^{\ell \times \ell\times \ell}$ are polynomials in the entries of the matrices $\bm{S},\bm{R}_1,\bm{R}_2,\bm{Q}$, the $\bm{\epsilon}_i$s, the $\alpha_i^j$s and $\beta_i^j$s. Equating these polynomials with a given instance of the evaluation key results in a system of $\mathcal{O}(\ell^3)$  equations in $\mathcal{O}(\ell)^2$ variables.
 The $\alpha_i^j$s, $\beta_i^j$s and the entries of the matrix $\bm{Q}$ depend on the extra $t-\ell$ points that are chosen independent of the secret key. The entries of $\bm{Q}$ also depend on the polynomials chosen for the quotienting operation.
Therefore, in order to retrieve the secret key from the evaluation key one would have to solve a  system of polynomial equations. The problem of Polynomial System Solving (PoSSo) is known to be NP-hard in general.   Most multivariate public key schemes rely on the hardness of solving this problem. For detailed analysis of this problem, one may refer to \cite{lazard1983grobner,kipnis1999cryptanalysis,faugere2003algebraic,courtois2002cryptanalysis,albrecht2011polly,bettale2009hybrid}. Further, the system of equations in this case is underdetermined (as a system of equations over $\mathbb{Q}$).

Now, the multiplication operation defines an `almost' bilinear map from $\mathbb{Z}_q^\ell \times \mathbb{Z}_q^\ell$ to $\mathbb{Z}_q^\ell$ (The rounding operations induce some non-linearity). In the absence of the $\bm{\epsilon}_i$s, the noiseless encryptions of $0$ constitute an invariant subspace of this map. This could potentially reveal information about the secret key (Although, to the best of the authors' knowledge there are no efficient algorithms to extract such subspaces for all $\ell$). The $\bm{\epsilon}_i$s ensure that this invariance is removed.

\subsubsection*{Noise in Multiplication}
Let us now analyze the noise in the decryption of $\bm{c}_{mult}$. Observe that, for $n+1 \leq j \leq \ell$, the noise in $\bm{c}_{mult}(j)$ is same as that in $\tilde{\bm{c}}_{mult}(j)$. If $e_{mult,j}$ denotes the noise in the $j^{th}$ entry of $\bm{c}_{mult}$ then using Equation (\ref{cmult_tilde}), we get 

\begin{align}
 e_{mult,j}&= \frac{q-1}{q}(m_{1,j}e_{2,j}^\prime+m_{2,j}e_{1,j}^\prime)+(2e_{1,j}^\prime-m_{1,j})K_{2,j} +(2e_{2,j}^\prime-m_{2,j})K_{1,j}\notag\\ &\quad -\frac{m_{1,j}m_{2,j}}{2q}+\frac{2}{q}e_{1,j}^\prime e_{2,j}^\prime
 \end{align}

The most significant term in $e_{mult,j}$ is $ (2e_{1,j}^\prime-m_{1,j})K_{2,j} +(2e_{2,j}^\prime-m_{2,j})K_{1,j} $ where the $K_{i,j}$s are generated due to the multiplication of $\bm{c}_i$s with the matrix $\bm{D}_i$ in Step 1. Observe that the $j^{th}$ entry of $\bm{c}_i\bm{D}_i$ is equal to $\langle \bm{c}_i\bm{R}^{-1},\tilde{\bm{s}}_j \rangle + \langle \bm{c}_i,(\bm{\epsilon}_i(:,j),\bm{0}) \rangle$ where $\bm{0}$ is the zero vector in $\mathbb{Q}^{\ell-n}$ and $\langle \bm{c}_i\bm{R}^{-1}, \tilde{\bm{s}}_j \rangle =m_{i,j} \left\lfloor \frac{q}{2} \right\rfloor +{e}_{i,j}+qK_{i,j}$. The magnitude of 
${K_{i,j}}$ is bounded by the one norm of $ \bm{R}^{-1}\tilde{\bm{s}}_j$ which is $\mathcal{O}(nq)$ since $\ell=\mathcal{O}(n)$. 
One could choose the secret key in such a way that the one norms of the $\bm{R}^{-1}\tilde{\bm{s}}_j$s are small. Alternatively, one could use a slightly modified version of the vector decomposition techniques given in \cite{brakerski2014leveled} to limit the values of the $\abs{K_{i,j}}$s. Firstly, for a suitable value of $u$ (which is $\mathcal{O}(1))$, we choose the entries of the $\bm{\epsilon}_i$s (all of which are less than 1) such that their binary expressions have less than $u$ bits  i.e. these entries can be written as $\sum_{k = 1}^ub_k2^{-k}$ for some $b_k$s in $ \{0,1\}$.  This technique consists of the following two functions
\begin{itemize}
\item \textsf{BitDecomp}$_{q,u}(\bm{v})$: Given $\bm{v} \in \mathbb{Q}^\ell$, let $\bm{x}_i \in \{0,1\}^\ell$ be such that $\bm{v}=\sum_{i=-u}^{\lfloor \text{log}\,q\rfloor} 2^i \cdot \bm{x}_i~({mod}~q)$. Output the vector
\begin{align*}
\left(\bm{x}_{-u},\ldots,\bm{x}_0,\ldots,\bm{x}_{\lfloor \text{log}\,q\rfloor} \right) \in \{0,1\}^{\ell(u+\lceil \text{log}\,q\rceil)}
\end{align*}
\item \textsf{PowersOfTwo}$_{q,u}(\bm{w})$: Given $\bm{w} \in \mathbb{Z}^\ell$, output the vector
\begin{align*}
\left( 2^{-u} \cdot\bm{w},\hdots,2^{-1}\cdot\bm{w},\bm{w},2 \cdot \bm{w},\hdots,2^{\lfloor \text{log}\,q\rfloor}\cdot \bm{w} \right)~{mod}~q \in \mathbb{Q}^{\ell(u+\lceil \text{log}\,q\rceil)}
\end{align*}
\end{itemize}
It can be easily verified that $\langle \bm{v},\bm{w}\rangle=\left\langle \textsf{BitDecomp}_{q,u}(\bm{v}),\textsf{PowersOfTwo}_{q,u}(\bm{w}) \right\rangle~{mod}~q$.

Let $\widetilde{\bm{D}}_i$s be the matrices got by applying \textsf{BitDecomp}$_{q,u}$ on the columns of $\bm{D}_i$s. Now, instead of multiplying the $\bm{c}_i$s with the $\bm{D}_i$s, if we multiply the \textsf{PowersOfTwo}$_{q,u}(\bm{c}_i)$s with the respective $\widetilde{\bm{D}}_i$s, the corresponding 
$K_{i,j}$s are given as
\begin{equation}
   {K_{i,j}} = \frac{1}{q}\cdot{\langle \textsf{PowersOfTwo}_{q,u}(\bm{c}_i), \textsf{BitDecomp}_{q,u}({\bm{D}_i}(:,j)) \rangle-m_{i,j}\lfloor q/2 \rfloor-e_{i,j}^{\prime}}
\end{equation}
Therefore, their magnitudes can be computed as
{ \begin{align} \label{K}
\abs{K_{i,j}} &= \frac{1}{q}\cdot\abs{\left\langle \textsf{PowersOfTwo}_{q,u}(\bm{c}_i), \textsf{BitDecomp}_{q,u}({\bm{D}_i}(:,j))  \right\rangle-m_{i,j}\lfloor q/2 \rfloor-e_{i,j}^{\prime}} \notag \\
 &\leq \frac{\abs{\left\langle \textsf{PowersOfTwo}_{q,u}(\bm{c}_i),\widetilde{\bm{D}}_i(:,j)  \right\rangle}}{q}+1 \notag\\
&\leq \frac{1}{2}\cdot\norm{\widetilde{\bm{D}}_i(:,j)}_1+1 \notag \\
 &\leq \frac{1}{2}\cdot \left(\ell(u+\lceil{\text{log}\,q})\rceil\right))+1 \notag \\
&= \mathcal{O}(n\,{\text{log}\,q}) \textrm{  ( Since }\ell \textrm{ is } \mathcal{O}(n) \textrm{ and } u \textrm{ is } \mathcal{O}(1))
\end{align}}

To accommodate these changes in the evaluation key, the tensor $\mathbfcal{M}$ can be modified as:
\begin{align}
\mathbfcal{M}=\mathbfcal{T} \times_1 \widetilde{\bm{D}} \times_2 \widetilde{\bm{D}} \times_3 (\bm{R}^T\bm{Q}^T\bm{B}^T)  \in \mathbb{Q}^{\ell(u+\lceil\text{log}\,q\rceil) \times \ell(u+\lceil\text{log}\,q\rceil)  \times \ell}
\end{align}
Then, given $\bm{c}_1$ and $\bm{c}_2$, $\bm{c}_{mult}$ can be computed as
\begin{align}
\bm{c}_{mult}=\left\lfloor  \mathcal{B}_{\mathbfcal{M}}\left(\textsf{PowersOfTwo}_{q,u}({\bm{c}}_1) ,\textsf{PowersOfTwo}_{q,u}( {\bm{c}}_2)\right)  \right\rfloor ~{mod}~q\in \mathbb{Z}_q^\ell
\end{align}

\subsubsection*{Noise Magnitude}

If the above mentioned vector decomposition techniques are used then $\abs{K_{i,j}} \leq \mathcal{O}(n\,\text{log}\,q)$  for $i \in \{1,2\}$. Since, $\abs{{e}_{i,j}^\prime} \leq 2B$ for $i \in \{1,2\}$ and $B <\left\lfloor {q}/{2} \right\rfloor/2 \leq  {q}/{4} $, the magnitude of the error after multiplication is as follows:
\begin{align} \label{mult noise}
\norm{{\bm{e}}_{mult}}_{\infty} \leq 4B+2(4B+1)\cdot\mathcal{O}(n\,\text{log}\,q)+\frac{8B^2+1}{q}=\mathcal{O}(n\,\text{log}\,q)\cdot B
\end{align}

\begin{theorem}
The proposed scheme with parameters $n,q,L,\mathcal{X}$ with $\abs{\mathcal{X}} \leq B$ can evaluate circuits of depth $L$ when $
q/B \geq \left(\mathcal{O}(n\,\text{log}\,q)\right)^L$.
\end{theorem}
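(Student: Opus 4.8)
The plan is to track how the noise magnitude evolves as a ciphertext passes through the gates of a depth-$L$ arithmetic circuit and then to invoke the decryption-correctness criterion of Lemma~\ref{CoE}, which guarantees recovery of the plaintext whenever the infinity norm of the noise stays below $\lfloor q/2 \rfloor/2$. The two elementary building blocks are already quantified earlier in the paper: a fresh ciphertext carries noise bounded by $B$; homomorphic addition of two ciphertexts with noise at most $B^\prime$ yields noise at most $2B^\prime+1$, an additive and constant-factor increase; and homomorphic multiplication of two ciphertexts with noise at most $B^\prime$ yields, by the estimate of Equation~(\ref{mult noise}), noise at most $\mathcal{O}(n\,\text{log}\,q)\cdot B^\prime$, a multiplicative increase by the factor $\mathcal{O}(n\,\text{log}\,q)$. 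The essential observation is that this multiplicative factor, and not the additive contribution of the addition gates, governs the asymptotic growth.

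First I would set up an induction on the multiplicative level $i$ of a wire, where the level counts the maximum number of multiplication gates along any path from an input to that wire. Writing $B_i$ for an upper bound on the noise of every wire at level $i$, the base case is $B_0 \leq B$ by the bound on fresh ciphertexts. For the inductive step, consider a wire at level $i$: between the level-$(i-1)$ wires feeding it and this wire there is exactly one layer of multiplication gates, possibly surrounded by addition gates. Since additions only add noise, the operands presented to the multiplication gate still have noise bounded by a constant multiple of $B_{i-1}$, and applying Equation~(\ref{mult noise}) gives $B_i \leq \mathcal{O}(n\,\text{log}\,q)\cdot B_{i-1}$. Unrolling this recursion yields $B_L \leq B\cdot\bigl(\mathcal{O}(n\,\text{log}\,q)\bigr)^{L}$ for the output wire of any circuit of multiplicative depth $L$.

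Finally I would impose the correctness condition. By Lemma~\ref{CoE}, the evaluated ciphertext decrypts correctly provided $\norm{\bm{e}}_{\infty} < \lfloor q/2 \rfloor/2$. Combining this with the bound $B_L \leq B\cdot\bigl(\mathcal{O}(n\,\text{log}\,q)\bigr)^{L}$, it suffices that $B\cdot\bigl(\mathcal{O}(n\,\text{log}\,q)\bigr)^{L} < \lfloor q/2 \rfloor/2$. Since $\lfloor q/2 \rfloor/2$ is $\Theta(q)$, rearranging and absorbing the constant factor into the $\mathcal{O}(\cdot)$ notation gives exactly the stated requirement $q/B \geq \bigl(\mathcal{O}(n\,\text{log}\,q)\bigr)^{L}$, completing the argument.

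The step I expect to require the most care is the inductive bound at a multiplication gate: I must confirm that intervening addition gates cannot spoil the clean geometric recursion. The resolution is that additions contribute only additively and, for a circuit of polynomial size, at most a polynomial factor per level, which is dominated by and may be folded into the $\mathcal{O}(n\,\text{log}\,q)$ multiplicative factor. A secondary subtlety is that the estimate in Equation~(\ref{mult noise}) was derived under the hypothesis $B < \lfloor q/2 \rfloor/2$ on the operand noise, so one must check that the running bound $B_i$ stays below this threshold at every intermediate level; this is precisely what the final inequality $q/B \geq \bigl(\mathcal{O}(n\,\text{log}\,q)\bigr)^{L}$ guarantees for all $i \leq L$, since $B_i \leq B_L$.
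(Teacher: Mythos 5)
Your argument is correct and follows essentially the same route as the paper: a fresh ciphertext has noise at most $B$, each multiplication level multiplies the noise bound by $\mathcal{O}(n\,\text{log}\,q)$, unrolling gives $\left(\mathcal{O}(n\,\text{log}\,q)\right)^L\cdot B$, and requiring this to stay below the $\Theta(q)$ decryption threshold of Lemma~\ref{CoE} yields the stated condition. Your additional remarks on absorbing the addition gates and on verifying the noise hypothesis at intermediate levels are sound refinements of the same induction, not a different method.
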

\begin{proof}
From Lemma \ref{CoE}, noise in a fresh encryption is at most $B$. After one level of multiplication, it increases to $\mathcal{O}(n\,\text{log}\,q)\cdot B$. If $e_{mult}^i$ denotes the noise at level $i$, then $ \abs{e_{mult}^i} = \mathcal{O}(n\,\text{log}\,q) \cdot \abs{e_{mult}^{i-1}}$. Therefore, $\abs{e_{mult}^L} =\left(\mathcal{O}(n\,\text{log}\,q)\right)^L \cdot B$. For correctness of decryption, we need $ \abs{e_{mult}^L} \leq q/4 $. Hence, $q/B \geq \left(\mathcal{O}(n\,\text{log}\,q)\right)^L$.
\qed
\end{proof}

\subsubsection*{Parameters and Performance}

Similar to \cite{gentry2013homomorphic}, we choose $n=\mathcal{O}(\lambda)$ to be a fixed parameter and $\ell= \mathcal{O}(n)$ to be slightly bigger than $n$ such that $\ell-n=\mathcal{O}(1)$.
The proposed scheme can evaluate a circuit of depth $L$ as long as $q/B \geq \left(\mathcal{O}(n\,\text{log}\,q)\right)^L$. Therefore, we can choose $q$ to be of bit size $\mathcal{O}(L\,\text{log}\,n)$ similar to \cite{brakerski2014leveled,brakerski2012fully}. Gentry's bootstrapping theorem \cite{gentry2009fully} states that if the decryption circuit complexity of an $L$-homomorphic scheme is less than $L$, then there exists a leveled fully homomorphic encryption scheme. Decryption circuit complexity in the proposed scheme can be bounded by $ \mathcal{O}(\text{log}\,n+\text{log}\,\text{log}\,q) $ using similar techniques as in \cite{brakerski2014efficient}. For $L=\mathcal{O}(\text{log}\,n)$, $q/B$ in the proposed scheme is quasi-polynomial in $n$ and its security is based on the hardness of LWE for quasi-polynomial factors given by $\gamma = n^{\mathcal{O}(\text{log}\,n)}$ (since $\gamma=(q/B)\cdot \tilde{\mathcal{O}}(n) $).

The cost of multiplying two ciphertexts is of the order of $\mathcal{O}(\ell^3\,\text{log}^2\,q) = \tilde{\mathcal{O}}(n^3\cdot L^2)$ while that of adding two ciphertexts is $\mathcal{O}(\ell)$. Therefore, the per gate computation of the leveled FHE scheme is $\tilde{\mathcal{O}}(n^3\cdot L^2)$. 

\section{Private key to Public key Conversion} \label{PK}

The proposed scheme can be converted to a public key scheme as follows.  
For some $\epsilon > 0$, let $\bm{C}_0$ be a list of $d=(1+\epsilon)(\ell\,\text{log}\,q)$ encryptions of the zero vector under the private key scheme explained earlier in the paper. Let $\bm{b}_1,\hdots,\bm{b}_{\ell-n}$ be the standard basis for $\mathbb{Z}_q^{\ell-n}$, i.e., $\bm{b}_1=(1,0,0,\hdots,0), \bm{b}_2=(0,1,0,\hdots,0)$ and so on. Let $\bm{c}_{\bm{b}_1},\hdots,\bm{c}_{\bm{b}_{\ell-n}}$ be the encryptions of these vectors using the private key scheme. Construct a matrix $\bm{C}_{pk} \in \mathbb{Z}_q^{(\ell-n) \times \ell}$ by assigning $\bm{C}_{pk}(i,:)=\bm{c}_{\bm{b}_i}$ for $1 \leq i \leq \ell-n$. Then, the public key is given by $pk=(\bm{C}_0,\bm{C}_{pk})$ and the secret key is same as that of the private key scheme. 
The public key scheme can be described in terms of the following algorithms.
\begin{itemize}
\setlength\itemsep{1ex}
\item[$  \bullet$]{\textsf{PK.KeyGen}$(1^{\lambda})$}: It takes the security parameter $\lambda$ and outputs the public encryption key $pk=(\bm{C}_0,\bm{C}_{pk})$ and the secret decryption key $sk={\bm{S}}_{\text{dec}} $ where $\bm{S}_{\text{dec}}=\bm{R}^{-1}\cdot\begin{bmatrix} \bm{S} ~& \bm{I}_{\ell- n}\end{bmatrix}^T$.

\item[$  \bullet$]{\textsf{PK.Encrypt}$(pk,\bm{m})$}: To encrypt a message $\bm{m}\in\{0,1\}^{\ell-n}$, select a random subset $S $ of $\bm{C}_0$ and compute the ciphertext as
\begin{align}
\bm{c}=\bm{m} \cdot \bm{C}_{pk}+\sum_{\bm{c}_i \in S}\bm{c}_i~(mod~q)
\end{align}

\item[$  \bullet$]\textsf{PK.Decrypt}$(sk,\bm{c})$: Decryption is performed by computing  
\begin{align}
   \bm{m}=\left\lfloor \frac{1}{\left\lfloor q/2 \right\rfloor}\left( \bm{c}\cdot \bm{S}_{\text{dec}} \,{mod}\,q\right)\right\rceil\,{mod}~2
\end{align}
\end{itemize}
If $\tilde{\bm{e}}_{pk}=(\bm{0},\bm{e}_{pk})$ denotes the noise associated with the ciphertext $\bm{c}$, then  
{ \begin{align}
 \bm{c}\cdot\bm{S}_{\text{dec}}  =    \left(\bm{m} \cdot \bm{C}_{pk}+\sum_{\bm{c}_i \in S}\bm{c}_i\right)\cdot\bm{S}_{\text{dec}}  
  = \bm{m} \cdot \left\lfloor \frac{q}{2} \right\rfloor +\bm{e}_{pk}~({mod}\,q)
\end{align}}

Observe that $\norm{\bm{e}_{pk}}_{\infty} \leq (wt(\bm{m})+\abs{S})\cdot B$ where $wt(\bm{m})$ denotes the weight of the vector $\bm{m}$ and $\abs{S}$ denotes the cardinality of the set $S$. Therefore, the decryption function outputs $\bm{m}$ when $(wt(\bm{m})+\abs{S})\cdot B < q/4$. 
The security of the scheme follows  from the security of the private key scheme and Claim 5.3 in \cite{regev2009lattices} (a special case of the leftover hash lemma). This claim is restated as follows
\begin{lemma}{\bf[Claim 5.3 \cite{regev2009lattices}]}\label{leftoverhash}
Let $S=\{\bm{g}_1,\hdots,\bm{g}_d\}$ be some subset of $\mathbb{Z}_q^\ell$ for some $d \in \mathbb{N}$. Then, for a uniform choice of $S$, given a hash function $h_{\bm{A}}:\{0,1\}^d \rightarrow \mathbb{Z}_q^\ell$ defined as $h_{\bm{A}}(\bm{x})=\bm{x}\bm{A}^T~mod~q$ where $\bm{A}(:,j)=\bm{g}_j$ for $1 \leq j \leq d$, the expectation of the statistical distance of the distribution on $\bm{x}\bm{A}^T~mod~q$ from uniform has an upper bound of$\sqrt{q^\ell/2^d}$.  Further, the probability of this statistical distance being greater than $\sqrt[4]{q^\ell/2^d}$ is upper bounded by $\sqrt[4]{q^\ell/2^d}$.
 \end{lemma}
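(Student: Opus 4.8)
The plan is to read this as the leftover hash lemma for the subset-sum hash family $\{h_{\bm{A}}\}$ and to control the statistical distance through the collision probability. Fix notation: for a given $\bm{A}$ (equivalently a given $S$) write $p_{\bm{A}}(\bm{v}) := \Pr_{\bm{x}}[h_{\bm{A}}(\bm{x}) = \bm{v}]$ for $\bm{x}$ uniform on $\{0,1\}^d$, and let $\Delta(\bm{A})$ denote the statistical distance between $p_{\bm{A}}$ and the uniform distribution on $\mathbb{Z}_q^\ell$.

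First I would verify universality of the family. For distinct $\bm{x},\bm{x}' \in \{0,1\}^d$ put $\bm{z} := \bm{x}-\bm{x}' \in \{-1,0,1\}^d\setminus\{\bm{0}\}$, so a collision means $\sum_{j=1}^d z_j\bm{g}_j = \bm{0} \pmod q$. Choosing an index $j_0$ with $z_{j_0}\neq 0$ and noting $z_{j_0}\in\{-1,1\}$ is a unit modulo $q$, I would condition on all $\bm{g}_j$ with $j\neq j_0$; the collision then pins $\bm{g}_{j_0}$ to a single value, which has probability exactly $q^{-\ell}$ over the uniform choice of $\bm{g}_{j_0}\in\mathbb{Z}_q^\ell$. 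Hence $\Pr_{\bm{A}}[h_{\bm{A}}(\bm{x}) = h_{\bm{A}}(\bm{x}')] = q^{-\ell}$ for all $\bm{x}\neq\bm{x}'$.

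The core estimate relates $\Delta(\bm{A})$ to the collision probability $\mathrm{CP}(\bm{A}) := \sum_{\bm{v}} p_{\bm{A}}(\bm{v})^2$. By Cauchy-Schwarz over the $q^\ell$ values, $\Delta(\bm{A}) = \tfrac12\sum_{\bm{v}}\lvert p_{\bm{A}}(\bm{v})-q^{-\ell}\rvert \leq \tfrac12\sqrt{q^\ell}\,\sqrt{\mathrm{CP}(\bm{A})-q^{-\ell}}$, where I used the identity $\sum_{\bm{v}}(p_{\bm{A}}(\bm{v})-q^{-\ell})^2 = \mathrm{CP}(\bm{A})-q^{-\ell}$. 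Taking the expectation over the uniform $S$, the key computation is $\mathbb{E}_{\bm{A}}[\mathrm{CP}(\bm{A})] = \Pr_{\bm{A},\bm{x},\bm{x}'}[h_{\bm{A}}(\bm{x})=h_{\bm{A}}(\bm{x}')]$ for independent uniform $\bm{x},\bm{x}'$; conditioning on whether $\bm{x}=\bm{x}'$ and invoking the universality bound gives $\mathbb{E}_{\bm{A}}[\mathrm{CP}(\bm{A})] \leq 2^{-d} + q^{-\ell}$, so $\mathbb{E}_{\bm{A}}[\mathrm{CP}(\bm{A})-q^{-\ell}]\leq 2^{-d}$. Applying Jensen's inequality (concavity of $\sqrt{\cdot}$) to move the expectation inside the root then yields $\mathbb{E}_{\bm{A}}[\Delta(\bm{A})] \leq \tfrac12\sqrt{q^\ell}\sqrt{2^{-d}} \leq \sqrt{q^\ell/2^d}$, which is the first claim.

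Finally, the tail bound is just Markov's inequality. Writing $\delta := \sqrt{q^\ell/2^d}$, so that $\mathbb{E}_{\bm{A}}[\Delta(\bm{A})]\leq\delta$ and $\sqrt[4]{q^\ell/2^d} = \sqrt{\delta}$, I get $\Pr_{\bm{A}}[\Delta(\bm{A})>\sqrt{\delta}] \leq \mathbb{E}_{\bm{A}}[\Delta(\bm{A})]/\sqrt{\delta} \leq \delta/\sqrt{\delta} = \sqrt{\delta} = \sqrt[4]{q^\ell/2^d}$. I expect the only real content to be the core leftover-hash estimate, namely the Cauchy-Schwarz passage from $\mathrm{CP}$ to $\Delta$ together with the Jensen step, since this is exactly where the factor $\sqrt{q^\ell}$ meets $\sqrt{2^{-d}}$ to produce the stated bound. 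The universality is immediate because the domain $\{0,1\}^d$ forces the differences to have $\pm1$ entries, which are units modulo any $q$, so primality of $q$ plays no essential role in this particular claim, and Markov handles the tail mechanically.
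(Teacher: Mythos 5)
Your proof is correct and follows essentially the same route as the cited source: the paper itself gives no proof of this lemma (it is quoted verbatim from Claim 5.3 of Regev's paper), and Regev's argument is exactly the collision-probability/Cauchy--Schwarz computation you carry out, with Jensen to pass the expectation inside the square root and Markov for the tail. Your observation that the $\pm 1$ differences are units in any $\mathbb{Z}_q$ (so universality needs nothing from primality) is a correct and slightly more explicit accounting than the original.
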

Now, for some $\epsilon>0$, if $d=(1+\epsilon)\ell\,\text{log}\,q$, then  $\sqrt{q^\ell/2^d}$ and $\sqrt[4]{q^\ell/2^d}$ are  negligible in $\ell$.

\begin{lemma}
For parameters $n,\ell,d$ and $\mathcal{X}$, if there exists an efficient algorithm  that can distinguish between encryptions of any two distinct messages $\bm{m}_1$ and $\bm{m}_2$ under the above described public key scheme, then there exists a message $\bm{m}$ and an algorithm  that can distinguish between encryptions of $\bm{m}$  under the private key scheme described in Section \ref{HSM:sec4} and the uniform distribution on $\mathbb{Z}_q^\ell$
\end{lemma}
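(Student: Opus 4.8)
The plan is to prove the contrapositive through a hybrid argument: any public-key distinguisher is turned into a private-key distinguisher separating encryptions of the zero vector from the uniform distribution on $\mathbb{Z}_q^\ell$, contradicting the security established earlier for the private scheme. I would work in the CPA model, so that the reduction has access to an encryption oracle of the private scheme. The adversary's view is the triple $(\bm{C}_0,\bm{C}_{pk},\bm{c}^*)$, where $\bm{c}^*=\bm{m}_\beta\bm{C}_{pk}+\sum_{\bm{c}_i\in S}\bm{c}_i$ is the challenge ciphertext for a random challenge subset $S$. I would then define a chain of hybrids $H_0,\dots,H_d$, where $H_0$ is the real game in which the $d=(1+\epsilon)\ell\,\text{log}\,q$ rows of $\bm{C}_0$ are genuine encryptions of $\bm{0}$, and where successive hybrids replace these rows, one at a time, by uniformly random vectors in $\mathbb{Z}_q^\ell$; the block $\bm{C}_{pk}$ stays as genuine encryptions of the basis vectors throughout and is produced by the reduction via the encryption oracle.

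Next I would show that in the final hybrid $H_d$ the challenge ciphertext is statistically independent of $\beta$. Since the rows of $\bm{C}_0$ are now uniform, Lemma \ref{leftoverhash} applied to the hash $h_{\bm{A}}(\bm{x})=\bm{x}\bm{A}^T$, with the columns of $\bm{A}$ being the rows of $\bm{C}_0$ and $\bm{x}$ the indicator vector of $S$, shows that $\sum_{\bm{c}_i\in S}\bm{c}_i$ lies within statistical distance $\sqrt[4]{q^\ell/2^d}$ of uniform over $\mathbb{Z}_q^\ell$, except with probability at most $\sqrt[4]{q^\ell/2^d}$ over the choice of $\bm{C}_0$. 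For the stated $d=(1+\epsilon)\ell\,\text{log}\,q$ both quantities are negligible in $\ell$, so the additive term acts as a one-time pad on $\bm{m}_\beta\bm{C}_{pk}$ and the entire view in $H_d$ carries no information about $\beta$; hence every efficient adversary has negligible advantage in $H_d$.

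Assuming the public-key adversary has non-negligible advantage $\epsilon$ in $H_0$, the triangle inequality forces some adjacent pair $H_{k-1},H_k$ to be distinguished with advantage at least $\epsilon/d$. These hybrids differ only in whether the $k$-th row of $\bm{C}_0$ is a fresh encryption of $\bm{0}$ or a uniform vector. I would therefore construct the private-key distinguisher $\mathcal{B}$ which, on input a challenge sample (either an encryption of $\bm{0}$ or a uniform vector), plants it in row $k$ of $\bm{C}_0$, fills rows $1,\dots,k-1$ with uniform vectors and rows $k+1,\dots,d$ with oracle-generated encryptions of $\bm{0}$, synthesizes $\bm{C}_{pk}$ and $\bm{c}^*$ as in the game, and finally outputs the public-key adversary's guess. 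Because $d$ is polynomial, $\mathcal{B}$ inherits advantage at least $\epsilon/d$, which is non-negligible, in separating encryptions of $\bm{m}=\bm{0}$ from uniform.

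The step I expect to be the main obstacle is the independence claim in $H_d$: Lemma \ref{leftoverhash} guarantees closeness to uniform only with high probability over the choice of $\bm{C}_0$ and only for a \emph{freshly sampled} subset $S$, whereas the adversary also sees $\bm{C}_0$ and $\bm{C}_{pk}$ themselves. Making the argument rigorous therefore requires conditioning on $\bm{C}_0$ and $\bm{C}_{pk}$ and verifying that the residual randomness of the challenge subset $S$ still suffices to mask $\bm{m}_\beta\bm{C}_{pk}$, so that the conditional distribution of $\bm{c}^*$ remains close to uniform and $\beta$-independent. A secondary point to handle carefully is that the reduction must manufacture every non-challenge component of the public key without the secret key, which is precisely why the whole argument is carried out in the CPA setting where an encryption oracle is available.
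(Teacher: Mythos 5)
Your proof is correct, but it takes a genuinely different route from the paper's. You hybridize over the $d$ rows of $\bm{C}_0$, replacing encryptions of $\bm{0}$ one at a time with uniform vectors, and only invoke Lemma \ref{leftoverhash} in the final all-uniform hybrid --- which is exactly where its hypothesis (uniformly chosen $\bm{g}_j$) actually holds --- concluding that the challenge is statistically independent of $\beta$; your reduction then targets the single message $\bm{m}=\bm{0}$ and pays a factor of $d$ in advantage. The paper instead applies the leftover hash lemma \emph{directly} to $\bm{C}_0$ as the base case of an induction (its rows are only pseudorandom, not uniform, so strictly speaking this base case itself needs the computational step you make explicit), and then inducts over the $\ell-n$ rows of $\bm{C}_{pk}$, writing $\mathcal{D}_{k+1}=\mathcal{D}_k\cup\mathcal{D}_k^{\prime}$ and extracting a distinguisher for encryptions of some basis vector $\bm{b}_{k+1}$ versus uniform. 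Your version is the more standard Regev-style argument and handles the conditioning on the public key more carefully (you correctly flag that the leftover hash guarantee is only with high probability over $\bm{C}_0$ and must be applied conditionally); the paper's version hybridizes over far fewer components ($\ell-n=\mathcal{O}(1)$ rather than $d$) and more directly exhibits which message $\bm{m}$ witnesses the conclusion, but is less explicit about how pseudorandomness of the subset-sum sets $\mathcal{D}_i$ yields indistinguishability of encryptions of two fixed messages. Both proofs implicitly require the reduction to manufacture the remaining public-key components, i.e., access to private-key encryptions of $\bm{0}$ and the $\bm{b}_i$, so your explicit appeal to the CPA setting does not make your argument weaker than the paper's.
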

\begin{proof}
For $0\leq i \leq d+\ell-n$, let $\bm{P}_i$ be the matrix got by taking the first $d+i$ rows of $pk$ ($\bm{P}_0 = \bm{C}_0$). Let $\mathcal{D}_i$ denote the set of vectors got by taking the sum of subsets of the rows of $\bm{P}_i$ i.e., $\mathcal{D}_i := \{\bm{x} \bm{P}_i:\bm{x} \in \{0,1\}^{d+i}\}$. Note that, for $j \leq i$, $\mathcal{D}_j \subset \mathcal{D}_i$. We claim that, for $0\leq i \leq d$, there exists no efficient algorithm that can distinguish between vectors sampled from a uniform distribution on $\mathcal{D}_i$ and vectors sampled from a uniform distribution on $\mathbb{Z}_q^\ell$. We prove this claim using induction. 

For a set $S$ and an algorithm $W$, let $p_W(S)$ be the probability that $W$ returns 1 when the input is sampled from a uniform distribution on $S$.
As a consequence of Lemma \ref{leftoverhash}, for any algorithm $W$ that takes as input elements of $\mathbb{Z}_q^{\ell}$,
\begin{equation}
    p_{W}(\mathcal{D}_0) - p_{W}(\mathbb{Z}_q^{\ell}) \leq 2^{-\omega (n)}
\end{equation}

 Assume that the claim is true for $i \leq k$. The $k+1$-th row of $\bm{P}_{k+1}$ is a random encryption of the message $\bm{b}_{k+1}$. Let $\mathcal{D}_k^{\prime}$ be the following set of vectors,
 \begin{equation}
     \mathcal{D}_k^{\prime} := \{\bm{P}_{k+1}(:,k+1) + \bm{v}\,|\, \bm{v} \in \mathcal{D}_k \}
 \end{equation}
 Clearly $\mathcal{D}_{k+1} := \mathcal{D}_k \cup \mathcal{D}_k^{\prime} $. Therefore, for any algorithm $W$
 \begin{equation}
     p_W(\mathcal{D}_{k+1}) = \frac{1}{2}p_W(\mathcal{D}_k) + \frac{1}{2}p_W(\mathcal{D}_{k}^{\prime})
 \end{equation}
 Therefore,
 \begin{equation}\label{dk+1}
     p_W(\mathcal{D}_{k+1})- p_{W}(\mathbb{Z}_q^{\ell})= \frac{1}{2}((p_W(\mathcal{D}_k)-p_{W}(\mathbb{Z}_q^{\ell})) + (p_W(\mathcal{D}_{k}^{\prime})-p_{W}(\mathbb{Z}_q^{\ell})))
 \end{equation}
The term $(p_W(\mathcal{D}_k)-p_{W}(\mathbb{Z}_q^{\ell}))$ is negligible by the induction assumption. Therefore, if the LHS term in Equation \ref{dk+1} is non-negligible, then $(p_W(\mathcal{D}_{k}^{\prime})-p_{W}(\mathbb{Z}_q^{\ell}))$ must be non-negligible. Consider the distribution of vectors $\bm{v} = \bm{v}_1 +\bm{v}_2$ where $\bm{v}_2$ is sampled from a uniform distribution on $\mathcal{D}_{k}$. This distribution is uniform if $\bm{v}_1$ is sampled from the uniform distribution on $\mathbb{Z}_q^\ell$. On the other hand, if $\bm{v}_1$ is sampled uniformly at random from the set of encryptions of $\bm{b}_{k+1}$ then the distribution of $\bm{v}$ is uniform in $\mathcal{D}_{k}^{\prime}$. Thus, if $(p_W(\mathcal{D}_{k}^{\prime})-p_{W}(\mathbb{Z}_q^{\ell}))$ is non-negligible, then the algorithm $W$ can be used to distinguish between vectors that are sampled from the uniform distribution on the encryptions of $\bm{b}_{k+1}$ (under the private key scheme) and vectors that are sampled from the uniform distribution on $\mathbb{Z}_q^{\ell}$. \qed
\end{proof}

\section{Conclusion}\label{ch4:sec:sum}
In this paper, we have described a leveled fully homomorphic encryption scheme which achieves additive and multiplicative homomorphism without key switching, the security of which depends on the hardness of the LWE problem. This scheme can be used to encrypt message vectors and the addition and multiplication is done bit wise. The computation cost per multiplication is $\tilde{\mathcal{O}}(n^3 \cdot L^2)$.

\begin{acknowledgements}
The authors are grateful to Dr. Vinay Wagh, Department of Mathematics, Indian Institute of Technology Guwahati for his valuable suggestions.
\end{acknowledgements}

\bibliographystyle{spmpsci} 
\bibliography{CC_ref}

\begin{thebibliography}{10}
\providecommand{\url}[1]{{#1}}
\providecommand{\urlprefix}{URL }
\expandafter\ifx\csname urlstyle\endcsname\relax
  \providecommand{\doi}[1]{DOI~\discretionary{}{}{}#1}\else
  \providecommand{\doi}{DOI~\discretionary{}{}{}\begingroup
  \urlstyle{rm}\Url}\fi

\bibitem{albrecht2011polly}
Albrecht, M.R., Farshim, P., Faugere, J.C., Perret, L.: Polly {C}racker,
  revisited.
\newblock In: Advances in Cryptology -- ASIACRYPT 2011. \textit{Lecture Notes
  in Computer Science}, vol. 7073, pp. 179--196. Springer, Berlin (2011)

\bibitem{10.5555/795663.796360}
Bellare, M., Desai, A., Jokipii, E., Rogaway, P.: A concrete security treatment
  of symmetric encryption.
\newblock In: Annual Symposium on Foundations of Computer Science, FOCS, pp.
  394--403. IEEE Computer Society, USA (1997)

\bibitem{bettale2009hybrid}
Bettale, L., Faugere, J.C., Perret, L.: Hybrid approach for solving
  multivariate systems over finite fields.
\newblock Journal of Mathematical Cryptology \textbf{3}(3), 177--197 (2009)

\bibitem{brakerski2012fully}
Brakerski, Z.: Fully homomorphic encryption without modulus switching from
  classical {G}ap{SVP}.
\newblock In: Advances in Cryptology -- CRYPTO 2012. \textit{Lecture Notes in
  Computer Science}, vol 7417, pp. 868--886. Springer, Berlin (2012)

\bibitem{brakerski2013packed}
Brakerski, Z., Gentry, C., Halevi, S.: Packed ciphertexts in {LWE}-based
  homomorphic encryption.
\newblock In: Public-Key Cryptography -- PKC 2013. \textit{Lecture Notes in
  Computer Science}, vol 7778, pp. 1--13. Springer, Berlin (2013)

\bibitem{brakerski2014leveled}
Brakerski, Z., Gentry, C., Vaikuntanathan, V.: ({L}eveled) {F}ully homomorphic
  encryption without bootstrapping.
\newblock ACM Transactions on Computation Theory (TOCT) \textbf{6}(3), 1--36
  (2014)

\bibitem{10.1145/2488608.2488680}
Brakerski, Z., Langlois, A., Peikert, C., Regev, O., Stehl\'{e}, D.: Classical
  hardness of learning with errors.
\newblock In: ACM Symposium on Theory of Computing, STOC, pp. 575--584. ACM,
  USA (2013)

\bibitem{brakerski2011fully}
Brakerski, Z., Vaikuntanathan, V.: Fully homomorphic encryption from ring-{LWE}
  and security for key dependent messages.
\newblock In: Advances in Cryptology -- CRYPTO 2011. \textit{Lecture Notes in
  Computer Science}, vol 6841, pp. 505--524. Springer, Berlin (2011)

\bibitem{brakerski2014efficient}
Brakerski, Z., Vaikuntanathan, V.: Efficient fully homomorphic encryption from
  (standard) {LWE}.
\newblock SIAM Journal on Computing \textbf{43}(2), 831--871 (2014)

\bibitem{cheon2013batch}
Cheon, J.H., Coron, J.S., Kim, J., Lee, M.S., Lepoint, T., Tibouchi, M., Yun,
  A.: Batch fully homomorphic encryption over the integers.
\newblock In: Advances in Cryptology -- EUROCRYPT 2013. \textit{Lecture Notes
  in Computer Science}, vol 7881, pp. 315--335. Springer, Berlin (2013)

\bibitem{chillotti2016faster}
Chillotti, I., Gama, N., Georgieva, M., Izabachene, M.: Faster fully
  homomorphic encryption: Bootstrapping in less than 0.1 seconds.
\newblock In: Advances in Cryptology -- ASIACRYPT 2016. \textit{Lecture Notes
  in Computer Science}, vol 10031, pp. 3--33. Springer, Berlin (2016)

\bibitem{coron2011fully}
Coron, J.S., Mandal, A., Naccache, D., Tibouchi, M.: Fully homomorphic
  encryption over the integers with shorter public keys.
\newblock In: Advances in Cryptology -- CRYPTO 2011. \textit{Lecture Notes in
  Computer Science}, vol 6841, pp. 487--504. Springer, Berlin (2011)

\bibitem{courtois2002cryptanalysis}
Courtois, N.T., Pieprzyk, J.: Cryptanalysis of block ciphers with overdefined
  systems of equations.
\newblock In: Advances in Cryptology -- ASIACRYPT 2002. \textit{Lecture Notes
  in Computer Science}, vol 2501, pp. 267--287. Springer, Berlin (2002)

\bibitem{dowerah2019somewhat}
Dowerah, U., Krishnaswamy, S.: A somewhat homomorphic encryption scheme based
  on multivariate polynomial evaluation.
\newblock In: 29th International Conference Radioelektronika
  (RADIOELEKTRONIKA), pp. 1--6. IEEE, Pardubice, Czech Republic (2019)

\bibitem{ducas2015fhew}
Ducas, L., Micciancio, D.: {FHEW}: bootstrapping homomorphic encryption in less
  than a second.
\newblock In: Advances in Cryptology -- EUROCRYPT 2015. \textit{Lecture Notes
  in Computer Science}, vol 9056, pp. 617--640. Springer, Berlin (2015)

\bibitem{faugere2003algebraic}
Faugere, J.C., Joux, A.: Algebraic cryptanalysis of hidden field equation
  ({HFE}) cryptosystems using {G}r{\"o}bner bases.
\newblock In: Advances in Cryptology -- CRYPTO 2003. \textit{Lecture Notes in
  Computer Science}, vol 2729, pp. 44--60. Springer, Berlin (2003)

\bibitem{gentry2009fully}
Gentry, C.: A fully homomorphic encryption scheme.
\newblock Ph.D. thesis, Stanford University (2009)

\bibitem{gentry2011implementing}
Gentry, C., Halevi, S.: Implementing {G}entry{'}s fully-homomorphic encryption
  scheme.
\newblock In: Advances in Cryptology -- EUROCRYPT 2011. \textit{Lecture Notes
  in Computer Science}, vol 6632, pp. 129--148. Springer, Berlin (2011)

\bibitem{gentry2012better}
Gentry, C., Halevi, S., Smart, N.P.: Better bootstrapping in fully homomorphic
  encryption.
\newblock In: Public Key Cryptography -- PKC 2012. \textit{Lecture Notes in
  Computer Science}, vol 7293, pp. 1--16. Springer, Berlin (2012)

\bibitem{gentry2012fully}
Gentry, C., Halevi, S., Smart, N.P.: Fully homomorphic encryption with polylog
  overhead.
\newblock In: Advances in Cryptology -- EUROCRYPT 2012. \textit{Lecture Notes
  in Computer Science}, vol 7237, pp. 465--482. Springer, Berlin (2012)

\bibitem{gentry2013homomorphic}
Gentry, C., Sahai, A., Waters, B.: Homomorphic encryption from {L}earning with
  {E}rrors: Conceptually-simpler, asymptotically-faster, attribute-based.
\newblock In: Advances in Cryptology -- CRYPTO 2013. \textit{Lecture Notes in
  Computer Science}, vol 8042, pp. 75--92. Springer, Berlin (2013)

\bibitem{kipnis1999cryptanalysis}
Kipnis, A., Shamir, A.: Cryptanalysis of the {HFE} public key cryptosystem by
  relinearization.
\newblock In: Advances in Cryptology -- CRYPTO 1999. \textit{Lecture Notes in
  Computer Science}, vol 1666, pp. 19--30. Springer, Berlin (1999)

\bibitem{lazard1983grobner}
Lazard, D.: {G}r{\"o}bner bases, gaussian elimination and resolution of systems
  of algebraic equations.
\newblock In: EUROCAL'83, European Computer Algebra Conference, pp. 146--156.
  Springer, England (1983)

\bibitem{micciancio2013deterministic}
Micciancio, D., Voulgaris, P.: A deterministic single exponential time
  algorithm for most lattice problems based on voronoi cell computations.
\newblock SIAM Journal on Computing \textbf{42}(3), 1364--1391 (2013)

\bibitem{peikert2009public}
Peikert, C.: Public-key cryptosystems from the worst-case shortest vector
  problem.
\newblock In: ACM symposium on Theory of Computing, STOC, pp. 333--342. ACM,
  USA (2009)

\bibitem{peikert2008framework}
Peikert, C., Vaikuntanathan, V., Waters, B.: A framework for efficient and
  composable oblivious transfer.
\newblock In: Advances in Cryptology -- CRYPTO 2008. \textit{Lecture Notes in
  Computer Science}, vol 5157, pp. 554--571. Springer, Berlin (2008)

\bibitem{peikert2011lossy}
Peikert, C., Waters, B.: Lossy trapdoor functions and their applications.
\newblock SIAM Journal on Computing \textbf{40}(6), 1803--1844 (2011)

\bibitem{regev2005lattices}
Regev, O.: On lattices, {L}earning with {E}rrors, random linear codes, and
  cryptography.
\newblock In: ACM symposium on Theory of Computing, STOC, pp. 84--93. ACM, USA
  (2005)

\bibitem{regev2009lattices}
Regev, O.: On lattices, learning with errors, random linear codes, and
  cryptography.
\newblock Journal of the ACM (JACM) \textbf{56}(6), 34 (2009)

\bibitem{schnorr1987hierarchy}
Schnorr, C.P.: A hierarchy of polynomial time lattice basis reduction
  algorithms.
\newblock Theoretical computer science \textbf{53}(2-3), 201--224 (1987)

\bibitem{smart2010fully}
Smart, N.P., Vercauteren, F.: Fully homomorphic encryption with relatively
  small key and ciphertext sizes.
\newblock In: Public Key Cryptography -- PKC 2010. \textit{Lecture Notes in
  Computer Science}, vol 6056, pp. 420--443. Springer, Berlin (2010)

\bibitem{smart2014fully}
Smart, N.P., Vercauteren, F.: Fully homomorphic {SIMD} operations.
\newblock Designs, Codes and Cryptography \textbf{71}(1), 57--81 (2014)

\bibitem{van2010fully}
Van~Dijk, M., Gentry, C., Halevi, S., Vaikuntanathan, V.: Fully homomorphic
  encryption over the integers.
\newblock In: Advances in Cryptology -- EUROCRYPT 2010. \textit{Lecture Notes
  in Computer Science}, vol 6110, pp. 24--43. Springer, Berlin (2010)

\end{thebibliography}

\end{document}